\DeclareMathOperator{\trace}{trace}
\DeclareMathOperator{\diag}{diag}
\newtheorem{theorem}{Theorem}
\newtheorem{lemma}{Lemma}
\newtheorem{corollary}{Corollary}
\title{From Sharma-Mittal to von-Neumann Entropy of a Graph}
\author{Souma Mazumdar\thanks{\texttt{souma.mazumdar@bose.res.in}}$^1$, Amrik Singh\thanks{\texttt{amriksingh@iitj.ac.in}}$^2$, Supriyo Dutta\thanks{\texttt{supriyo.dutta@bose.res.in}}$^1$, Sandeep Kumar Yadav\thanks{\texttt{sy@iitj.ac.in }}$^2$, Partha Guha\thanks{\texttt{partha@bose.res.in}}$^1$}
\date{$^1$Department of Theoretical Sciences \\ S. N. Bose National Centre for Basic Sciences\\ Block - JD, Sector - III, Salt Lake City, Kolkata - 700106. \vspace{.25cm} \\ $^2$Department of Electrical Engineering \\ Indian Institute of Technology Jodhpur\\ NH 65, Nagaur Road, Karwar, Jodhpur - 342037. \vspace{.25cm} }
\begin{document}
	\maketitle
	
	\begin{abstract}
		In this article, we introduce the Sharma-Mittal entropy of a graph, which is a generalization of the existing idea of the von-Neumann entropy. The well-known R{\'e}nyi, Thallis, and von-Neumann entropies can be expressed as limiting cases of Sharma-Mittal entropy. We have explicitly calculated them for cycle, path, and complete graphs. Also, we have proposed a number of bounds for these entropies. In addition, we have also discussed the entropy of product graphs, such as Cartesian, Kronecker, Lexicographic, Strong, and Corona products. The change in entropy can also be utilized in the analysis of growing network models (Corona graphs), useful in generating complex networks.\\
		\textbf{Keywords:} Sharma-Mittal entropy, R{\'e}nyi entropy, Tsallis entropy, graph Laplacian quantum states.
	\end{abstract}

	\section{Introduction}
	
		``Graph entropy" was first introduced in \cite{rashevsky1955life, trucco1956note, mowshowitz1968entropy} and applied for the problems in diverse fields to characterize the structure of graphs and to cater to the needs of an application. For instance, in mathematical chemistry, the graph entropy represents the structural information of graph-based systems \cite{bonchev1983information}, and the molecular structures \cite{bertz1981first}. This idea is utilized in finding the best possible encoding of messages, where the vertices of graphs are considered as symbols \cite{korner1973coding}. It is also a measure of the structural complexity in social networks \cite{butts2001complexity,everett1985role}. There are varieties of entropy functions on graphs  defined in a different context \cite{dehmer2011history, doi:10.1002/cplx.20379}. The relation between these entropy functions and the structural properties of graphs is a fundamental topic of research \cite{dehmer2008novel, dehmer2008structural}. The eigenvalues of graphs provide an elegant tool in this context. In this article, we propose Sharma-Mittal entropy of graphs which is a generalization of a number of well-known eigenvalue based entropies. 
		
		This work is at the interface of quantum information theory, entropy, and graph theory. The graph Laplacian quantum states are represented by a Laplacian matrix of a graph. The von-Neumann entropy is the measure of quantum information in a quantum state, which has been considered as the entropy of graph. Later it was generalized as the R{\'e}nyi and Tsallis entropy. The Sharma-Mittal entropy is  generalization of all of them. It is observed that these generalized entropies are more efficient in quantum information theory problems than the von-Neumann entropy \cite{mazumdar2018sharma}. The Von Neumann entropy of graphs was  introduced in\cite{braunstein2006some} and then analyzed in a number of works \cite{simmons2017symmetric, passerini2009quantifying, anand2009entropy, han2011characterizing, han2011learning, lyons2010identities, bai2012graph, minello2018neumann}. Three different Laplacian matrices of a graph are considered in the literature, which are the combinatorial Laplacian, signless Laplacian, and normalized Laplacian matrices. These articles primarily investigate the relation between the structure of the graph and the von-Neumann entropy. The relation between the quantum entanglement in graph Laplacian quantum states and the components of the graph is investigated in  \cite{de2016interpreting}. Random graphs are also recognized in the literature of the von-Neumann entropy of graphs \cite{du2010note}. The complexity of graphs is studied from the perspectives of thermodynamics in this direction \cite{han2012graph}.
		
		To the best of our knowledge, this article is the first introduction of the Sharma-Mittal entropy based on graph Laplacian quantum states, in the literature. We extensively study the Sharma-Mittal entropy of the graphs and establish other entropies as its limiting cases. The Laplacian eigenvalues of cycle, path and complete graphs are known. We calculate the Sharma-Mittal entropy for them explicitly. Apart from calculating the entropies of the graphs, we attempt to provide a number of limiting values by calculating the upper and lower bounds for entropy functions. The product graphs, such as the corona product of graphs,  are useful in modeling complex networks. In this work, we also calculate entropies for varieties of product graphs.
		
		Our paper is organized as follows. The following section contains a very brief review of the graph Laplacian quantum states to define generalized entropy functions for graphs. Here we also calculate these entropies for a number of graphs with known spectra. We utilize combinatorial Laplacian and signless Laplacian for our investigation. In section 3, we construct a number of bounds on the entropies in terms of graph parameters. The section 4 is dedicated to the study of entropy in product graph and a network growing model. Then we conclude the article.

	\section{Graph Laplacian quantum states and generalized von-Neumann entropies}

		A graph $G=(V,E)$ consists of a set of vertices $V$, and a set $E$ of unordered pair of vertices called edges. Throughout this article, $n$ and $m$ are the numbers of vertices, and edges of a graph $G$, respectively. The adjacency matrix of a graph $A(G)=(a_{ij})_{n \times n}$
		\begin{align}
		a_{ij} = \begin{cases} 1, \; \text{if} \; (v_{i},v_{j}) \in E(G);\\ 0, \; \text{if} \;  (v_{i},v_{j}) \notin E(G). \end{cases}
		\end{align}
		The degree of a  vertex $v_{i}$ is $d_{i}=\sum\limits_{j=1}^n a_{ij}$ . The degree matrix of a graph denoted by $D(G) = \diag\{d_{1}, d_{2} \dots d_{n}\}$. The combinatorial Laplacian and signless Laplacian matrix of the graph G are defined by $L(G) = D(G) - A(G)$ and $Q(G) = D(G) + A(G)$, respectively. For simplifying the nomenclature, we drop the prefix combinatorial to mention $L(G)$. The degree of a graph is denoted by $d = \sum_{i = 1}^n d_i = 2m$. As well, $d = \trace{L(G)} = \trace{Q(G)}$, for simple graphs. The eigenvalues of Laplacian matrix are $0 \leq \lambda_1 \leq \lambda_2 \leq \dots \leq \lambda_n$. Also, $\mu_1 \leq \mu_2 \leq \dots \leq \mu_n$ are the eigenvalues of the signless Laplacian matrix. The multi-sets containing $\lambda_i$ and $\mu_i$ are called $L$-spectra and $Q$-spectra respectively. Given any real number $q$ we denote $S_{L, q}(G) = \sum_{i = 1}^n \lambda_i^q$ and $S_{Q, q}(G) = \sum_{i = 1}^n \mu_i^q$. It is shown in \cite{akbari2010relation}, that $S_{Q, q}(G) \geq S_{L, q}(G)$ for any graph $G$ and any positive real number $q$.
		
		In quantum mechanics and information theory, a quantum state is represented by a density matrix, which is positive semidefinite, Hermitian and trace-one matrix. As, $L(G)$ and $Q(G)$ are positive semidefinite and Hermitian there are quantum states represented by the density matrices $\rho_L(G)$ and $\rho_Q(G)$, where
		\begin{align}
		\rho_L(G) = \frac{1}{d}L(G) ~\text{and}~ \rho_Q(G) = \frac{1}{d}Q(G),
		\end{align}
		respectively \cite{braunstein2006laplacian,adhikari2017laplacian}. They are called graph Laplacian quantum states. The von-Neumann entropy is a measure of quantum information in a quantum states. In a symbiotic fashion, we define the von-Neumann entropy of a graph $G$ which is the von-Neumann entropy of the corresponding graph Laplacian quantum state. The von-Neumann entropies and their generalizations are functions of the eigenvalues of density matrices. Note that, if $\lambda$ is an eigenvalue of $L(G)$ then $\frac{\lambda}{d}$ is an eigenvalue of $\rho_L(G)$. Similarly, $\frac{\mu}{d}$ is an eigenvalue of $\rho_Q(G)$. The spectra of a density matrix $\rho$ is a multi-set $\Lambda(\rho)$ of eigenvalues $\gamma$ of $\rho$. Thus, $\rho = \rho_L(G)$ or $\rho_Q(G)$ as well as $\gamma = \frac{\lambda}{d}$ or $\frac{\mu}{d}$, which will be clarified from the context. Denote $S_q(G) = \sum_{i = 1}^n \gamma^q$. Therefore, $S_q(G) = \frac{1}{d^q}S_{L,q}(G)$ and $S_q(G) = \frac{1}{d^q}S_{Q,q}(G)$, when the Laplacian and signless Laplacian spectra are considered, respectively. Throughout the article, we consider logarithm with respect to the base $2$. Now, we are in position to define a number of entropies of graphs, which are as follows:

		\textbf{Sharma-Mittal entropy:} The Sharma-Mittal entropy \cite{sharma1975entropy, mittal1975some, akturk2007sharma, nielsen2011closed} of a graph $G$ is denoted by, $H_{q, r}(G)$, and defined by,
		\begin{align}\label{Sharma_Mittal_entropy}
		H_{q, r}(G) = \frac{1}{1 - r}\left[\left(S_q(G)\right)^{\frac{1 - r}{1 - q}} - 1\right],
		\end{align}
		where $q$ and $r$ are two real parameters $q > 0, q \neq 1$, and $r \neq 1$.
		
		\textbf{R{\'e}nyi entropy:} The R{\'e}nyi entropy is a limiting case of the Sharma-Mittal entropy which is defined by:
		\begin{align}\label{Renyi_entropy}
		H^{(R)}_q(G) = \lim_{r \rightarrow 1} H_{q, r}(M) = \frac {1}{1 - q}\log \left(S_q(G)\right),
		\end{align} 
		where $q \geq 0$ and $q \neq 1$.
		
		\textbf{Tsallis entropy:} The Tsallis entropy of a graph $G$ is defined by:
		\begin{align}\label{Tsallis_entropy}
		H^{(T)}_{q}(G) = \lim_{r \rightarrow q} H_{q, r}(G) = \frac{1}{1-q}\left(S_q(G) - 1\right), 
		\end{align}
		where $q \geq 0$ and $q \neq 1$.
		
		\textbf{Von-Neumann entropy:} The Sharma-Mittal entropy reduces to von-Neumann entropy when both $q \rightarrow 1$ and $r \rightarrow 1$ in equation (\ref{Sharma_Mittal_entropy}), which is
		\begin{align}\label{von-Neumann_entropy}
		H(G) = \lim_{(q, r) \rightarrow (1,1)}H_{q, r}(G) = \lim_{q \rightarrow 1}H^{(T)}_{q}(G) = -\sum_{\gamma \in \Lambda(G)} \gamma\log(\gamma).
		\end{align}
		
		The eigenvalue based network parameters are well-investigated in the literature of complex network, such as the Esterda index \cite{estrada2000characterization}. Like all of them, there are non-isomorphic graphs with equal entropies. For instance, consider the following two non-isomorphic graphs \cite{dutta2018construction}:\\
		\begin{center} 
			\begin{tikzpicture}
			\node at (-1, -.5) {$G_1 = $};
			\draw [fill] (0, 0) circle [radius = 0.07];
			\node [above] at (0, 0) {$1$};
			\draw [fill] (1, 0) circle [radius = 0.07];
			\node [above] at (1, 0) {$2$};
			\draw [fill] (2, 0) circle [radius = 0.07];
			\node [above] at (2, 0) {$3$};
			\draw [fill] (3, 0) circle [radius = 0.07];
			\node [above] at (3, 0) {$4$};
			\draw [fill] (0, -1) circle [radius = 0.07];
			\node [below left] at (0, -1) {$5$};
			\draw [fill] (1, -1) circle [radius = 0.07];
			\node [below] at (1, -1) {$6$};
			\draw [fill] (2, -1) circle [radius = 0.07];
			\node [below right] at (2, -1) {$7$};
			\draw [fill] (3, -1) circle [radius = 0.07];
			\node [below] at (3, -1) {$8$};
			\draw (0,0) --(1,0);
			\draw (1,0) --(2,0);
			\draw (0,-1) --(1,-1);
			\draw (1,-1) --(2,-1);
			\draw (1,-1) --(3,0);
			\draw (0,-1) to[out=-90,in=-90] (2,-1);
			\end{tikzpicture}
			\hspace{.5cm}
			\begin{tikzpicture}
			\node at (-1, -.5) {$G_2 = $};
			\draw [fill] (0, 0) circle [radius = 0.07];
			\node [above] at (0, 0) {$1$};
			\draw [fill] (1, 0) circle [radius = 0.07];
			\node [above] at (1, 0) {$2$};
			\draw [fill] (2, 0) circle [radius = 0.07];
			\node [above] at (2, 0) {$3$};
			\draw [fill] (3, 0) circle [radius = 0.07];
			\node [above] at (3, 0) {$4$};
			\draw [fill] (0, -1) circle [radius = 0.07];
			\node [below left] at (0, -1) {$5$};
			\draw [fill] (1, -1) circle [radius = 0.07];
			\node [below] at (1, -1) {$6$};
			\draw [fill] (2, -1) circle [radius = 0.07];
			\node [below right] at (2, -1) {$7$};
			\draw [fill] (3, -1) circle [radius = 0.07];
			\node [below] at (3, -1) {$8$};
			\draw (0,0) --(1,0);
			\draw (1,0) --(2,0);
			\draw (0,-1) --(1,-1);
			\draw (1,-1) --(2,-1);
			\draw (1,0) --(3,-1);
			\draw (0,-1) to[out=-90,in=-90] (2,-1);
			\end{tikzpicture}\\
		\end{center}
		Note that, the spectrum $\Lambda(\rho_L(G_{1})  = \Lambda(\rho_L(G_{2}) = \{0, 0, 0, \frac{1}{12}, \frac{1}{12}, \frac{1}{4}, \frac{1}{4}, \frac{1}{3}\}$. Hence, their Sharma-Mittal, R{\'e}nyi, Tsallis and von-Neumann entropies are equal, when they are calculated based on Laplacian spectra. Also, it can be easily checked that for the following two graphs \cite{dutta2018constructing} these entropies are equal when they are calculated based on their signless Laplacian spectra.\\
		\begin{center}
			\begin{tikzpicture}
			\draw [fill] (0, 0) circle [radius = 0.07];
			\node [above] at (0, 0) {$1$};
			\draw [fill] (1, 0) circle [radius = 0.07];
			\node [above] at (1, 0) {$2$};
			\draw [fill] (0, -1) circle [radius = 0.07];
			\node [below] at (0, -1) {$4$};
			\draw [fill] (1, -1) circle [radius = 0.07];
			\node [below] at (1, -1) {$3$};
			\draw (0,0) --(0,-1);
			\draw (0,0) --(1,-1);
			\draw (0,-1) --(1,-1);
			\end{tikzpicture}
			\hspace{2cm}
			\begin{tikzpicture}
			\draw [fill] (0, 0) circle [radius = 0.07];
			\node [above] at (0, 0) {$1$};
			\draw [fill] (1, 0) circle [radius = 0.07];
			\node [above] at (1, 0) {$2$};
			\draw [fill] (0, -1) circle [radius = 0.07];
			\node [below] at (0, -1) {$4$};
			\draw [fill] (1, -1) circle [radius = 0.07];
			\node [below] at (1, -1) {$3$};
			\draw (0,0) --(0,-1);
			\draw (0,-1) --(1,0);
			\draw (0,-1) --(1,-1);
			\end{tikzpicture}
		\end{center} 

		Now, we calculate the Sharma-Mittal entropy for a number of graphs having explicit expressions for their $L$-spectra. The Laplacian eigenvalues of cycle graph $C_n$ with $n$ vertices are $2-2\cos(\frac{2\pi j}{n})$ where $j = 0, 1, \dots , (n-1)$ \cite{bapat2010graphs}. Degree of a cycle graph is $2n$, Hence, the eigenvalues of $\rho_Q(C_n)$ are $\lambda = \frac{1}{2n}(2-2\cos(\frac{2\pi j}{n})), j = 0, 1, \dots n$. Now, the equation (\ref{Sharma_Mittal_entropy}) indicates that the Sharma-Mittal entropy of a cycle graph based on its Laplacian eigenvalue is 
		\begin{align}\label{sharma-mittal_entropy_for_cycle}
		\begin{split}
		H_{q,r}(C_n)  = \frac{1}{1-r}\left[\left(\frac{2}{n}\right)^{\frac{q(1-r)}{1-q}}\left(\sum\limits_{j=0}^{n-1}\sin^{2q}\left(\frac{\pi j}{n}\right)\right)^{\frac{1-r}{1-q}}-1\right].
		\end{split}
		\end{align}    
		This expression may be further simplified for integer values of $q$ \cite{da2017basic}. It also appears in the study of Weyl character formula for $\operatorname{SU}(2)$ groups. Similarly, the R{\'e}nyi, Tsallis, and von-Neumann entropy of cycle graphs are $H_{q}^{(R)}(C_n) = \frac{1}{1-q} \log\left[ \sum_{j=0}^{n-1} \left( \frac{2}{n} \right)^q \sin^{2q} \left( \frac{\pi j}{n} \right) \right]$, $H_{q}^{(T)}(C_n) = \frac{1}{1-q} \left[ \left( \frac{2}{n} \right)^q \left( \sum_{j=0}^{n-1} \sin^{2q} \left( \frac{\pi j}{n} \right) \right) - 1 \right]$, and $H(C_n) = - \sum_{j=0}^{n-1} \frac{4\sin^2 \left( \frac{\pi j}{n} \right)}{2n} \log\left( \frac{4\sin^2\left(\frac{\pi j}{n}\right)}{2n} \right)$, respectively. It can be easily verified that these entropies are the limiting cases of equation (\ref{sharma-mittal_entropy_for_cycle}).
		
		The Laplacian eigenvalues of the complete graph $K_n$ with $n$ vertices are $0$ and $n$ with multiplicity $(n-1)$. The degree of $K_n$ is $d = n(n - 1)$. Hence, the eigenvalues of the density matrix $\rho_Q(K_n)$ are $\lambda = 0$, and $\frac{1}{n-1}$ with multiplicity $(n-1)$. Therefore, the Sharma-Mittal entropy of a complete graph based on $\rho_Q$ is 
		\begin{align}
		H_{q,r}(K_n) = \frac{1}{1-r}\left[\left\{\sum\limits_{i=1}^{n-1}\left(\frac{n}{n(n-1)}\right)^q\right\}^{\frac{1-r}{1-q}}-1\right] = \frac{1}{1-r}\left[\frac{1}{(n-1)^{r-1}}-1\right].
		\end{align}
		The R{\'e}nyi, Tsallis, and von-Neumann entropy of the complete graph based on Laplacian eigenvalues are $H_{q}^{(R)}(K_n) = \frac{1}{1-q}\log\left[\frac{1}{(n-1)^{q-1}}\right]$, $H_{q}^{(T)}(K_n) = \frac{1}{1-q}\left[\frac{1}{(n-1)^{q-1}}-1\right]$, and $H(K_n) = (n-1)\log(n-1)$, respectively.
		
		The path graph $P_n$ with $n$ vertices has Laplacian eigenvalues $2 - \cos(\frac{\pi j}{n})$ where $j= 0, 1, \dots (n - 1)$. The degree for path graphs is given by $d=2(n-1)$. Thus, the eigenvalues of the density matrix $\rho_Q(P_n)$ is given by $\lambda = \frac{2-\cos(\frac{\pi j}{n})}{2(n-1)}$. Now the equation (\ref{Sharma_Mittal_entropy}) indicates that the Sharma-Mittal entropy of a $P_n$ is
		\begin{align}\label{path_graph_SM_entropy}
		\begin{split}
		H_{q,r}(P_n) & = \frac{1}{1-r}\left[\left(\sum\limits_{j=0}^{n-1}\left(\frac{2-\cos\left(\frac{\pi j}{n}\right)}{2(n-1)}\right)^q\right)^{\frac{1-r}{1-q}}-1\right]  \\
		&=\frac{1}{1-r}\left[\left(\frac{1}{2(n-1)}\right)^{\frac{q(1-r)}{1-q}}\left(\sum\limits_{j=0}^{n-1}\left(2-\cos\left(\frac{\pi j}{n}\right)\right)^q\right)^{\frac{1-r}{1-q}}-1\right].
		\end{split}
		\end{align}
		The R{\'e}nyi, Tsallis, and von-Neumann entropy of $P_n$ are given by
		\begin{equation}
		\begin{split}
		H_{q}^{(R)}(P_n)& = \frac{1}{1-q}\log\left[\left(\frac{1}{2(n-1)}\right)^q\sum\limits_{j=0}^{n-1}\left(2-\cos\left(\frac{\pi j}{n}\right)\right)^q\right], \\
		H_{q}^{(T)}(P_n) &= \frac{1}{1-q}\left[\left(\frac{1}{2(n-1)}\right)^q\sum\limits_{j=0}^{n-1}\left(2-\cos\left(\frac{\pi j}{n}\right)\right)^q-1\right], ~\text{and} \\
		H(P_n) &= -\sum\limits_{j=0}^{n-1}\frac{2-\cos\left(\frac{\pi j}{n}\right)}{2(n-1)}\log\left(\frac{2-\cos\left(\frac{\pi j}{n}\right)}{2(n-1)}\right),
		\end{split}
		\end{equation}
		respectively  as a limiting case of the equation \ref{path_graph_SM_entropy}.

		The signless Laplacian eigenvalues \cite{cvetkovic2009towards} of cycle graph $C_n$ with $n$ vertices are $2+2\cos(\frac{2\pi j}{n})$ where $j = 0, 1, \dots , (n-1)$. Degree of a cycle graph is $2n$, Hence, the eigenvalues of $\rho_Q(C_n)$ are $\lambda = \frac{1}{2n}(2+2\cos(\frac{2\pi j}{n})), j = 0, 1, \dots n$. Now the equation (\ref{Sharma_Mittal_entropy}) indicates that the Sharma-Mittal entropy of a cycle graph based on its signless Laplacian eigenvalue is 
		\begin{align}\label{sharma-mittal_entropy_for_cycle_signless}
		\begin{split}
		H_{q,r}(C_n)  = \frac{1}{1-r}\left[\left(\frac{2}{n}\right)^{\frac{q(1-r)}{1-q}}\left(\sum\limits_{j=0}^{n-1}\cos^{2q}\left(\frac{\pi j}{n}\right)\right)^{\frac{1-r}{1-q}}-1\right].
		\end{split}
		\end{align}    
		Similarly, the R{\'e}nyi, Tsallis, and von-Neumann entropy of cycle graphs are respectively given by
		\begin{align}
			\begin{split}
				H_{q}^{(R)}(C_n) & = \frac{1}{1-q}\log\left[\sum\limits_{j=0}^{n-1}\left(\frac{2}{n}\right)^q\cos^{2q}\left(\frac{\pi j}{n}\right)\right], \\
				H_{q}^{(T)}(C_n) & = \frac{1}{1-q}\left[\left(\frac{2}{n}\right)^q\left(\sum\limits_{j=0}^{n-1}\cos^{2q}\left(\frac{\pi j}{n}\right)\right)-1\right], ~\text{and} \\
				H(C_n) & = -\sum\limits_{j=0}^{n-1}\frac{4\cos^2\left(\frac{\pi j}{n}\right)}{2n}\log\left(\frac{4\cos^2\left(\frac{\pi j}{n}\right)}{2n}\right).
			\end{split}
		\end{align}
		
		The signless Laplacian eigenvalues of the complete graph $K_n$ with $n$ vertices are $2(n-1)$ and $(n-2)$ with multiplicity $(n-1)$. The degree of $K_n$ is $d = n(n - 1)$. Hence, the eigenvalues of the density matrix $\rho_Q(K_n)$ are $\lambda = \frac{2}{n}$, and $\frac{n-2}{n(n-1)}$ with multiplicity $(n-1)$. Therefore, after simplification the Sharma-Mittal entropy of a complete graph based on $\rho_Q$ is 
		\begin{equation}
		H_{q,r}(K_n) =  \frac{1}{1-r}\left[\left\{\left(\frac{n-2}{n}\right)^q(n-1)^{1-q}+\left(\frac{2}{n}\right)^q\right\}^{\frac{1-r}{1-q}}-1\right].
		\end{equation}
		The R{\'e}nyi, Tsallis, and von-Neumann entropy of the complete graph based on signless Laplacian eigenvalues are respectively given by
		\begin{align}
		\begin{split}
		H_{q}^{(R)}(K_n) &= \frac{1}{1-q}\log\left[\left\{\left(\frac{n-2}{n}\right)^q(n-1)^{1-q}+\left(\frac{2}{n}\right)^q\right\}\right], \\
		H_{q}^{(T)}(K_n) &= \frac{1}{1-q}\left[\left\{\left(\frac{n-2}{n}\right)^q(n-1)^{1-q}+\left(\frac{2}{n}\right)^q\right\}-1\right], ~\text{and} \\	
		H(K_n) &= (n-1)\log\left[\frac{n(n-1)}{n-2}\right]+\log\frac{n}{2}.
		\end{split}
		\end{align}\\
		
		The path graph $P_n$ with $n$ vertices has signless Laplacian eigenvalues $2 + 2\cos(\frac{\pi j}{n})$ where $j= 0, 1, \dots (n - 1)$. The degree for path graphs is given by $d=2(n-1)$. Thus, the eigenvalues of the density matrix $\rho_Q(P_n)$ is given by $\lambda = \frac{1+\cos(\frac{\pi j}{n})}{n-1}$. Now the equation (\ref{Sharma_Mittal_entropy}) indicates that the Sharma-Mittal entropy of a $P_n$ is
		\begin{align}
		\begin{split}
		H_{q,r}(P_n)  =  =\frac{1}{1-r}\left[\left(\frac{2}{(n-1)}\right)^{\frac{q(1-r)}{1-q}}\left(\sum\limits_{j=0}^{n-1}\cos^{2q}\left(\frac{\pi j}{2n}\right)\right)^{\frac{1-r}{1-q}}-1\right].
		\end{split}
		\end{align}
		The R{\'e}nyi, Tsallis, and von-Neumann entropy of $P_n$ are respectively given by
		\begin{align}
		\begin{split}
		H_{q}^{(R)}(P_n) &= \frac{1}{1-q}\log\left[\left(\frac{2}{n-1}\right)^q\left(\sum\limits_{j=0}^{n-1}\cos^{2q}\left(\frac{\pi j}{2n}\right)\right)\right], \\
		H_{q}^{(T)}(P_n) &= \frac{1}{1-q}\left[\left(\frac{2}{n-1}\right)^q\left(\sum\limits_{j=0}^{n-1}\cos^{2q}\left(\frac{\pi j}{2n}\right)\right)-1\right], ~\text{and} \\
		H(P_n) &= -\sum\limits_{j=0}^{n-1}\frac{2\cos^2\left(\frac{\pi j}{2n}\right)}{n-1}\log\left(\frac{2\cos^2\left(\frac{\pi j}{2n}\right)}{n-1}\right).
		\end{split}
		\end{align}
	
		In addition, we can calculate the Sharma-Mittal entropy for a number of graphs, which are not explicitly done here, such as the complete bipartite graph $K_{p,q}$. Recall that, the $L$-eigenvaluies of $K_{p,q}$ are given by $p + q, p, q$ and $0$ with multiplicity $1$, $p - 1$, $q - 1$ and $1$, respectively.

	\section{Bounds on entropies}

		There are a number of graphs whose $L$-spectra, or $Q$-spectra do not have a known general expression. For them we propose a number of bounds on their corresponding entropy functions. We first construct the bound for Sharma-Mittal entropy. Bounds of other entropies are calculated as its limiting cases. Since $S_{Q, q}(G) \geq S_{L, q}(G)$ for any graph $G$ the Sharma-Mittal, R{\'e}nyi, Tsallis and von-Neumann entropy calculated from $\rho_Q(G)$ is greater than that calculated from the $\rho_L(G)$.

		\subsection{Bounds on entropy based on $\rho_L(G)$}

			In this subsections all the entropies are calculated with respect to the eigenvalues of $\rho_L(G)$ only.
			\begin{lemma}\label{Sharma_Mittal_entropy_bound}
				Given any graph $G$
				$$H_{q,r}(G) \leq \frac{1}{1-r} \left[ \left\{ n \left( \frac{\max \left( d_{i} + d_{j} \right)}{d} \right)^q \right\}^{\frac{1 - r}{1 - q}} - 1 \right],$$
				where $d_{i}$ and $d_{j}$ are degrees of any two vertices of $G$.
			\end{lemma}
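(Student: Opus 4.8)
The plan is to reduce the asserted inequality to a single scalar bound on $S_q(G)$ and then transport that bound through the Sharma--Mittal map. Recall from Section~2 that, for the Laplacian spectrum, $S_q(G) = \frac{1}{d^q}S_{L,q}(G) = \frac{1}{d^q}\sum_{i=1}^n \lambda_i^q$. Hence it suffices to produce a clean upper estimate for $\sum_{i=1}^n \lambda_i^q$ in terms of vertex-degree data, and then to verify that substituting this estimate into the definition~(\ref{Sharma_Mittal_entropy}) of $H_{q,r}(G)$ preserves the direction of the inequality.

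First I would bound each Laplacian eigenvalue. By the classical Anderson--Morley estimate the largest Laplacian eigenvalue satisfies $\lambda_n \leq \max(d_i + d_j)$ (the maximum taken over at least the adjacent pairs, so the bound is certainly valid if $\max(d_i+d_j)$ is read over all pairs of vertices), whence $\lambda_i \leq \lambda_n \leq \max(d_i+d_j)$ for every $i$. Since $q > 0$, the map $t \mapsto t^q$ is non-decreasing on $[0,\infty)$, so $\lambda_i^q \leq \left(\max(d_i+d_j)\right)^q$ for each $i$. Summing over the $n$ eigenvalues and dividing by $d^q$ gives the core estimate
\[
S_q(G) = \frac{1}{d^q}\sum_{i=1}^n \lambda_i^q \;\leq\; n\left(\frac{\max(d_i+d_j)}{d}\right)^q,
\]
which is valid for every $q>0$; write $B$ for its right-hand side.

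It remains to feed $S_q(G)\le B$ into $H_{q,r}(G)=\frac{1}{1-r}\bigl[(S_q(G))^{\frac{1-r}{1-q}}-1\bigr]$, and this monotonicity check is where I expect the real work to lie. Setting $g(s)=\frac{1}{1-r}\bigl[s^{\frac{1-r}{1-q}}-1\bigr]$, one computes $g'(s)=\frac{1}{1-q}\,s^{\frac{1-r}{1-q}-1}$, whose sign on $s>0$ is exactly the sign of $\frac{1}{1-q}$. Thus $g$ is non-decreasing precisely when $0<q<1$, and in that regime $S_q(G)\le B$ immediately yields $H_{q,r}(G)=g(S_q(G))\le g(B)$, which is the claimed bound; I would therefore carry out the argument for $0<q<1$. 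For $q>1$ the transform $g$ is decreasing, so the same chain reverses and the identical $B$ instead furnishes a lower bound, a subtlety worth flagging explicitly. Concretely, the final step reduces to the elementary observation that $\frac{1}{1-r}\bigl[(S_q(G))^{\alpha}-B^{\alpha}\bigr]\le 0$ with $\alpha=\frac{1-r}{1-q}$, which follows from a short comparison of the signs of $\alpha$ and of $1-r$ combined with $S_q(G)\le B$.
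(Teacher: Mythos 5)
Your proof follows the same route as the paper's: bound every Laplacian eigenvalue by $\lambda_n \le \max(d_i+d_j)$ (Anderson--Morley), deduce $S_q(G) \le n\left(\frac{\max(d_i+d_j)}{d}\right)^q$, and push this estimate through the Sharma--Mittal transform. The one substantive difference is that you verify the monotonicity of $g(s)=\frac{1}{1-r}\bigl[s^{\frac{1-r}{1-q}}-1\bigr]$ before applying it, and that check is not cosmetic: since $g'(s)=\frac{1}{1-q}\,s^{\frac{1-r}{1-q}-1}$, the transform is increasing only for $0<q<1$; for $q>1$ it is decreasing, so the same estimate on $S_q(G)$ yields a \emph{lower} bound on $H_{q,r}(G)$ rather than the claimed upper bound. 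The paper's proof substitutes the eigenvalue inequality directly inside the bracketed expression without this check, so the lemma as stated for all $q>0$, $q\neq 1$ is actually only valid on the range $0<q<1$; your restriction to that range, together with the explicit remark that the inequality reverses for $q>1$, is the correct reading. In short: same decomposition and same key spectral estimate, but your version repairs a sign issue that the paper's argument silently skips.
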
			
			\begin{proof}
				For any graph the maximum Laplacian eigenvalue $\lambda_{n} \leq \max{(d_{i}+d_{j})}$ \cite{bapat2010graphs}. Therefore,
				\begin{align}
				\frac{1}{1-r}\left[\left(\sum\limits_{i=1}^n \left(\frac{\lambda_{i}}{d}\right)^q\right)^{\frac{1-r}{1-q}}-1\right] \leq \frac{1}{1-r}\left[\left(n \left(\frac{\lambda_n}{d}\right)^q\right)^{\frac{1-r}{1-q}}-1\right].
				\end{align}
				Putting $\lambda_{n} \leq \max{(d_{i}+d_{j})}$ in the above expression we get the result.
			\end{proof}
			
			The lemma suggests that an upper bound of R{\'e}nyi, Tsallis, and von-Neumann entropy are given by\\ $\frac{1}{1-q}\log\left[n\left(\frac{\max{(d_{i}+d_{j})}}{d}\right)^q\right]$, $\frac{1}{1-q}\left[n\left(\frac{\max{(d_{i}+d_{j})}}{d}\right)^q-1\right]$, and $-n\left\{\frac{\max{(d_{i}+d_{j})}}{d}\right\}\log\left\{\frac{\max{(d_{i}+d_{j})}}{d}\right\}$, respectively.			
			\begin{corollary}
				For a regular graph $G$, $H_{q,r}(G) \leq \frac{1}{1-r}\left[2^{\frac{q(1-r)}{1-q}}n^{1-r}-1\right]$.
			\end{corollary}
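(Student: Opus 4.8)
The plan is to apply Lemma~\ref{Sharma_Mittal_entropy_bound} directly and then simplify the resulting expression using the defining feature of a regular graph. For a $k$-regular graph every vertex has the same degree, so $d_i = k$ for all $i$, which forces $\max(d_i + d_j) = 2k$. Moreover, the degree of the graph is $d = \sum_{i=1}^n d_i = nk$. Combining these two observations yields the key simplification
\begin{align}
\frac{\max(d_i + d_j)}{d} = \frac{2k}{nk} = \frac{2}{n}.
\end{align}

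Next I would substitute this value into the bound supplied by Lemma~\ref{Sharma_Mittal_entropy_bound}, obtaining
\begin{align}
H_{q,r}(G) \leq \frac{1}{1-r}\left[\left\{ n \left(\frac{2}{n}\right)^q \right\}^{\frac{1-r}{1-q}} - 1\right].
\end{align}
The remaining work is purely algebraic: I would rewrite the inner quantity as $n \cdot 2^q \cdot n^{-q} = 2^q\, n^{1-q}$, and then raise it to the power $\frac{1-r}{1-q}$, tracking the exponents separately. The base-$2$ factor contributes $2^{\frac{q(1-r)}{1-q}}$, while the factor $n^{1-q}$ raised to $\frac{1-r}{1-q}$ collapses to $n^{1-r}$ since the $(1-q)$ terms cancel. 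This produces exactly $2^{\frac{q(1-r)}{1-q}} n^{1-r}$, completing the proof.

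I expect no genuine obstacle here, as the corollary is an immediate specialization of the preceding lemma. The only point demanding a little care is the cancellation $(1-q)\cdot\frac{1-r}{1-q} = 1-r$ in the exponent of $n$, which is clean provided $q \neq 1$ (already assumed in the definition of the Sharma-Mittal entropy), so the simplification is fully justified on the stated parameter range.
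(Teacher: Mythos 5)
Your proposal is correct and follows exactly the same route as the paper: substitute $\max(d_i+d_j)=2k$ and $d=nk$ into Lemma~\ref{Sharma_Mittal_entropy_bound} and simplify. Your explicit exponent bookkeeping, showing $\bigl(2^q n^{1-q}\bigr)^{\frac{1-r}{1-q}} = 2^{\frac{q(1-r)}{1-q}} n^{1-r}$, is merely a more detailed writeup of the algebra the paper leaves implicit.
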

			\begin{proof}
				A $k$-regular graph has equal degree for all the vertices, which is $k$. Hence, $\max{(d_{i} + d{j})} = 2k$ and the degree of the graph is $nk$. Putting them in lemma \ref{Sharma_Mittal_entropy_bound} we have the result.
			\end{proof}

			Now, for a regular graph with the upper bound of R{\'e}nyi, Tsallis, and von-Neumann entropy are $\frac{1}{1-q}\log\left[2^qn^{1-q}\right]$, $\frac{1}{1-q}\left[2^qn^{1-q}-1\right]$, and $2\log\left(\frac{n}{2}\right)$, respectively.
			
			Recall that, we have denoted $S_{L,q}(G) = \sum_{i = 1}^n\lambda_i^q$, where $\lambda_i$ is an eigenvalue of $L(G)$. Given any bipartite graph it can be proved that \cite{zhou2010sum},
			\begin{align}\label{Laplacian_eigenvalue_lower_bound}
			S_{L, q}(G) \geq \left(\frac{\sum\limits_{i}d_{i}^2}{m}\right)^q+(n-2)\left(\frac{tnm}{\sum\limits_{i}d_{i}^2}\right)^{\frac{q}{n-2}},
			\end{align}
			which we use in the lemma below.
			\begin{lemma}
				The Sharma-Mittal entropy of a connected bipartite graph with $n \geq 3$ vertices, $m$ edges and $t$ spanning trees is bounded below by
				\[
				\frac{1}{1-r}\left[\frac{1}{(2m)^{\frac{q(1-r)}{1-q}}}\left[\left(\frac{\sum\limits_{i}d_{i}^2}{m}\right)^q+(n-2)\left(\frac{tnm}{\sum\limits_{i}d_{i}^2}\right)^{\frac{q}{n-2}}\right]^{\frac{1-r}{1-q}}-1\right],
				\]
				where $d_{i}$ are the degrees of the vertices
			\end{lemma}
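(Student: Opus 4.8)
The plan is to follow the template of the proof of Lemma~\ref{Sharma_Mittal_entropy_bound}: rewrite $H_{q,r}(G)$ as a function of $S_{L,q}(G)$ alone and then substitute the lower bound (\ref{Laplacian_eigenvalue_lower_bound}). Recall that for these $\rho_L$-based entropies we have $S_q(G) = \frac{1}{d^q}S_{L,q}(G)$ with $d = 2m$, so that
\[
H_{q,r}(G) = \frac{1}{1-r}\left[\left(S_q(G)\right)^{\frac{1-r}{1-q}} - 1\right] = \frac{1}{1-r}\left[\frac{1}{(2m)^{\frac{q(1-r)}{1-q}}}\left(S_{L,q}(G)\right)^{\frac{1-r}{1-q}} - 1\right].
\]
First I would isolate the prefactor $(2m)^{-q(1-r)/(1-q)}$ in this way, which already produces the outer shape of the claimed bound; it then remains to replace $S_{L,q}(G)$ by the quantity $B := \left(\frac{\sum_{i} d_i^2}{m}\right)^q + (n-2)\left(\frac{tnm}{\sum_{i} d_i^2}\right)^{\frac{q}{n-2}}$ that lower-bounds it for a connected bipartite graph with $n \geq 3$ via (\ref{Laplacian_eigenvalue_lower_bound}).

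The only genuine issue is that $S_{L,q}(G) \geq B$ must be pushed through the map $x \mapsto \frac{1}{1-r}\left[(2m)^{-q(1-r)/(1-q)}\, x^{\frac{1-r}{1-q}} - 1\right]$, which is built from a power with exponent $\frac{1-r}{1-q}$ and an outer prefactor $\frac{1}{1-r}$, either of which may reverse inequalities. Writing this map as $f(x)$, I would differentiate: since the $(1-r)$ in the prefactor cancels against the $(1-r)$ in the exponent, one finds $f'(x) = \frac{1}{1-q}(2m)^{-q(1-r)/(1-q)} x^{\frac{1-r}{1-q}-1}$, whose sign for $x > 0$ equals that of $\frac{1}{1-q}$ and is therefore independent of $r$. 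Hence $f$ is monotonically increasing precisely when $q < 1$, so in that regime the bound $S_{L,q}(G) \geq B$ yields $H_{q,r}(G) = f(S_{L,q}(G)) \geq f(B)$, and $f(B)$ is exactly the claimed lower bound.

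The step I expect to be the main obstacle is this monotonicity bookkeeping: verifying that the two potentially order-reversing operations always compensate one another, so that the net direction is governed by $q$ alone. This also exposes the implicit hypothesis $0 < q < 1$ under which the stated inequality is a genuine lower bound, since for $q > 1$ the same computation shows $f$ decreasing, giving $f(S_{L,q}(G)) \leq f(B)$ and turning the displayed quantity into an upper bound instead. With the monotonicity settled, the conclusion follows by direct substitution, and the R{\'e}nyi, Tsallis, and von-Neumann analogues can then be read off as the $r \to 1$, $r \to q$, and $(q,r) \to (1,1)$ limits, exactly as in the preceding lemmas.
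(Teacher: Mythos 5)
Your proof follows the same route as the paper's own: pull out the prefactor $(2m)^{-q(1-r)/(1-q)}$ so that $H_{q,r}(G)$ becomes a function of $S_{L,q}(G)$ alone, then substitute the lower bound on $S_{L,q}(G)$ for connected bipartite graphs from equation (\ref{Laplacian_eigenvalue_lower_bound}). The one step you add --- verifying that the map $x \mapsto \frac{1}{1-r}\bigl[(2m)^{-q(1-r)/(1-q)}\,x^{\frac{1-r}{1-q}}-1\bigr]$ is monotone, with derivative $\frac{1}{1-q}(2m)^{-q(1-r)/(1-q)}x^{\frac{1-r}{1-q}-1}$ whose sign is that of $\frac{1}{1-q}$ --- is exactly the step the paper's proof silently skips, and your computation is correct: it shows the displayed quantity is a genuine lower bound only for $0<q<1$, and reverses to an upper bound for $q>1$, a restriction the lemma as stated does not mention.
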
	
			
			\begin{proof}
				We know that, the degree of a graph with $m$ edges is $2m$. Therefore, the equation (\ref{Sharma_Mittal_entropy}) indicates that 
				\begin{align}
				H_{q, r}(G) = \frac{1}{1 - r}\left[\frac{1}{(2m)^{\frac{q(1-r)}{1-q}}} \left(S_{L, q}(G)\right)^{\frac{1 - r}{1 - q}} - 1\right].
				\end{align}
				Now applying equation (\ref{Laplacian_eigenvalue_lower_bound}) we find the result.
			\end{proof}

			Similarly, the lower bounds of R{\'e}nyi and Tsallis of connected bipartite graphs are respectively given by, 
			\begin{align}
				\begin{split}
					& \frac{1}{1-q}\log\left[\frac{1}{(2m)^{q}}\left[\left(\frac{\sum\limits_{i}d_{i}^2}{m}\right)^q+(n-2)\left(\frac{tnm}{\sum\limits_{i}d_{i}^2}\right)^{\frac{q}{n-2}}\right]\right], \text{and} \\
					& \frac{1}{1-q}\left[\frac{1}{(2m)^{q}}\left[\left(\frac{\sum\limits_{i}d_{i}^2}{m}\right)^q+(n-2)\left(\frac{tnm}{\sum\limits_{i}d_{i}^2}\right)^{\frac{q}{n-2}}\right]-1\right],
				\end{split}
			\end{align}
			as a limiting case of the bound on the Sharma-Mittal entropy. 	
		\subsection{Bounds on entropy based on $\rho_Q(G)$} 

			In this subsections all the entropies are calculated with respect to the eigenvalues of $\rho_Q(G)$ only.			
			\begin{lemma}
				The Sharma-Mittal entropy of a graph
				\[H_{q,r}(G) \leq \frac{1}{1-r}\left[\left\{n\left(\frac{\sqrt{4m+2(n-1)(n-2)}}{2m}\right)^q\right\}^{\frac{1-r}{1-q}}-1\right].\]
			\end{lemma}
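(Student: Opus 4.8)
The plan is to follow the same two-stage strategy as the earlier bound based on $\rho_L(G)$: reduce the entropy bound to an upper bound on the largest signless Laplacian eigenvalue $\mu_n$, and then establish that eigenvalue bound. The eigenvalues of $\rho_Q(G)$ are $\gamma_i = \mu_i/d = \mu_i/(2m)$, so $S_q(G) = \sum_{i=1}^n(\mu_i/(2m))^q$. For $q > 0$ every summand is at most $(\mu_n/(2m))^q$, hence $S_q(G) \le n(\mu_n/(2m))^q$. Because $H_{q,r}(G)$ is obtained from $S_q(G)$ through $u \mapsto \frac{1}{1-r}(u^{(1-r)/(1-q)}-1)$, which is monotone in $u$ (increasing precisely when $q<1$; one tracks the sign of $1-q$ exactly as in the preceding $\rho_L(G)$ lemma), the eigenvalue estimate transfers directly to $H_{q,r}(G) \le \frac{1}{1-r}[\{n(\mu_n/(2m))^q\}^{(1-r)/(1-q)}-1]$. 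It therefore remains only to prove $\mu_n \le \sqrt{4m+2(n-1)(n-2)}$.

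For this I would bound $\mu_n^2$ by the maximum row sum of $Q^2$. Since $Q = D+A$ has nonnegative entries, so does $Q^2$, and as $Q$ is positive semidefinite we have $\mu_n^2 = \lambda_{\max}(Q^2) \le \max_i R_i$, where $R_i$ denotes the $i$-th row sum of $Q^2$. Using that the $k$-th row of $Q$ sums to $2d_k$, a direct computation gives $R_i = \sum_k Q_{ik}(2d_k) = 2\big(d_i^2 + \sum_{k\sim i}d_k\big)$, the sum being over the neighbours $k$ of $i$.

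The key estimate is then $d_i^2 + \sum_{k\sim i}d_k \le 2m + (n-1)(n-2)$ for every vertex $i$. I would prove it by writing $\sum_{k\sim i}d_k = 2m - d_i - \sum_{k\not\sim i,\,k\ne i}d_k \le 2m - d_i$, so that $d_i^2 + \sum_{k\sim i}d_k \le 2m + d_i(d_i-1)$, and then using $d_i \le n-1$ to get $d_i(d_i-1)\le (n-1)(n-2)$. Combining the three estimates gives $\mu_n^2 \le 2\big(2m+(n-1)(n-2)\big) = 4m + 2(n-1)(n-2)$, and substituting into the entropy inequality above finishes the proof. I expect the main obstacle to be selecting the right eigenvalue estimate: the easier bounds $\mu_n \le 2\Delta \le 2(n-1)$ and $\mu_n^2 \le \trace(Q^2) = \sum_i d_i^2 + 2m$ are respectively too weak and of the wrong form (the latter can exceed $4m+2(n-1)(n-2)$), whereas the row-sum bound on $Q^2$ is exactly tight for $K_n$, where both sides equal $4(n-1)^2$, and reproduces the stated expression.
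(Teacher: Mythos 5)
Your proposal is correct, and its first stage coincides exactly with the paper's argument: bound each eigenvalue of $\rho_Q(G)$ by $\mu_n/(2m)$ to get $S_q(G)\le n\left(\mu_n/(2m)\right)^q$, then push this through the Sharma--Mittal formula. The difference is in how the estimate $\mu_n\le\sqrt{4m+2(n-1)(n-2)}$ is obtained: the paper simply cites it from Cvetkovi{\'c} and Simi{\'c} \cite{cvetkovic2009towards}, whereas you prove it from scratch by bounding $\lambda_{\max}(Q^2)$ by the maximum row sum of the nonnegative matrix $Q^2$, computing $R_i=2\bigl(d_i^2+\sum_{k\sim i}d_k\bigr)$, and estimating $\sum_{k\sim i}d_k\le 2m-d_i$ and $d_i\le n-1$. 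I checked these steps and they are sound (e.g.\ for $P_3$ one gets $\mu_n^2=9\le 12$, and for $K_n$ the bound is tight at $4(n-1)^2$), so your version is self-contained where the paper's is not --- a genuine, if modest, gain. One caveat applies equally to both arguments: the map $u\mapsto\frac{1}{1-r}\bigl(u^{(1-r)/(1-q)}-1\bigr)$ is increasing in $u$ only when $q<1$, so for $q>1$ the direction of the final inequality reverses; you at least flag this sign issue, while the paper passes over it silently, but neither proof resolves it.
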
			
			\begin{proof}
				If the graph $G$ has $m$ edges then degree of the graph $d = 2m$. The maximum eigenvalue of $Q(G)$ is $\mu_n$. Therefore,
				\begin{align}
				\frac{1}{1-r}\left[\left\{\sum\limits_{i=1}^{n} \left(\frac{\mu_{i}}{d}\right)^q\right\}^{\frac{1-r}{1-q}}-1\right] \leq \frac{1}{1-r}\left[\left\{n \left(\frac{\mu_n}{2m}\right)^q\right\}^{\frac{1-r}{1-q}}-1\right].
				\end{align}
				Also, the maximum signless Laplacian eigenvalue $\mu_{n} \leq \sqrt{4m+2(n-1)(n-2)}$ \cite{cvetkovic2009towards}. Putting it in the above equation, we find the result.
			\end{proof} 

			The above lemma indicates that upper bound of R{\'e}nyi, Tsallis, and von-Neumann entropy of a graph are respectively given by  $\frac{1}{1-q}\log \left[n\left(\frac{\sqrt{4m+2(n-1)(n-2)}}{2m}\right)^q\right]$, $\frac{1}{1-q}\left[n\left(\frac{\sqrt{4m+2(n-1)(n-2)}}{2m}\right)^q-1\right]$, and\\ $-n\left(\frac{\sqrt{4m+2(n-1)(n-2)}}{2m}\right)\log\left(\frac{\sqrt{4m+2(n-1)(n-2)}}{2m}\right)$.
			
			There are bounds on the largest eigenvalue of $Q(G)$ in terms of other graph parameters. Let the maximum degree of vertices in $G$ be $\delta$, and the clique number of $G$ be $w$. It can be proved that $\mu_n \leq \delta + n(1-\frac{1}{w})$ \cite{liu2008maximum}. 
			
			\begin{lemma}
				The Sharma-Mittal entropy of a graph $G$ with maximum degree $\delta$ and clique number $w$ is bounded above by
				\[\frac{1}{1-r}\left[\left[\frac{n}{(2m)^q}\left(\delta+n\left(1-\frac{1}{w}\right)\right)^q\right]^{\frac{1-r}{1-q}}-1\right].\]
			\end{lemma}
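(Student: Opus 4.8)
The plan is to mirror the argument of the preceding lemma on $\rho_Q(G)$, changing only the spectral-radius bound used at the very last step. Starting from the definition in equation~(\ref{Sharma_Mittal_entropy}) and using $d = 2m$, I would write the Sharma-Mittal entropy in terms of the signless Laplacian spectrum as
\[
H_{q,r}(G) = \frac{1}{1-r}\left[\left\{\sum_{i=1}^n\left(\frac{\mu_i}{2m}\right)^q\right\}^{\frac{1-r}{1-q}} - 1\right].
\]
The first key step is to bound the inner sum. Since $\mu_n$ is the largest signless Laplacian eigenvalue and $q > 0$, each term satisfies $\mu_i^q \le \mu_n^q$, so that $\sum_{i=1}^n (\mu_i/2m)^q \le n(\mu_n/2m)^q$. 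This is exactly the inequality already exploited in the previous lemma, so I can reuse it verbatim.

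Next I would propagate this bound through the outer map $x \mapsto \frac{1}{1-r}\bigl(x^{(1-r)/(1-q)} - 1\bigr)$ to reach
\[
H_{q,r}(G) \le \frac{1}{1-r}\left[\left\{n\left(\frac{\mu_n}{2m}\right)^q\right\}^{\frac{1-r}{1-q}} - 1\right].
\]
The final step is to insert the bound $\mu_n \le \delta + n\bigl(1-\tfrac{1}{w}\bigr)$ of Liu and Liu quoted immediately before the statement, and then to factor $(2m)^{-q}$ out of the brace, which produces exactly the claimed expression
\[
\frac{1}{1-r}\left[\left[\frac{n}{(2m)^q}\left(\delta + n\left(1-\frac{1}{w}\right)\right)^q\right]^{\frac{1-r}{1-q}} - 1\right].
\]

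The step I expect to require the most care is the propagation through the outer map, since the sense of the inequality after raising to the power $\frac{1-r}{1-q}$ and multiplying by $\frac{1}{1-r}$ depends on the signs of $1-q$ and $1-r$. Concretely, $f(x) = \frac{1}{1-r}\bigl(x^{(1-r)/(1-q)} - 1\bigr)$ has derivative $f'(x) = \frac{1}{1-q}\,x^{(1-r)/(1-q)-1}$ for $x > 0$, so $f$ is increasing precisely when $q < 1$, while for $q > 1$ the monotonicity reverses and the bound must be interpreted accordingly. Once this monotonicity bookkeeping is settled in the same manner as in the earlier $\rho_Q(G)$ lemmas, the remainder is a direct substitution of the $\mu_n$ bound and routine algebra, and the R{\'e}nyi, Tsallis, and von-Neumann analogues follow as the usual limiting cases $r \to 1$, $r \to q$, and $(q,r) \to (1,1)$.
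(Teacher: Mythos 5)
Your proof follows essentially the same route as the paper's: bound $\sum_i \mu_i^q \le n\,\mu_n^q$, insert the Liu--Liu estimate $\mu_n \le \delta + n\left(1-\frac{1}{w}\right)$, and substitute into the Sharma--Mittal formula with $d = 2m$. You are in fact more careful than the paper, which silently ignores the monotonicity issue you flag (the direction of the inequality after applying $x \mapsto \frac{1}{1-r}\left(x^{\frac{1-r}{1-q}}-1\right)$ depends on the signs of $1-q$ and $1-r$), so your version is, if anything, a more honest account of the same argument.
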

		
			\begin{proof}
				Note that,
				\begin{align}
				S_{Q,q}(G) \leq n\left(\frac{q_{n}}{2m}\right)^q = \frac{n}{(2m)^q}\left(\delta+n\left(1-\frac{1}{w}\right)\right)^q.
				\end{align}
				Putting it in the equation (\ref{Sharma_Mittal_entropy}) we find the result. 
			\end{proof}			
			In terms of maximum degree and clique number, the upper bounds of R{\'e}nyi, Tsallis and von-Neumann entropy of a graph are $\frac{1}{1-q}\log\left[\frac{n}{(2m)^q}\left(\delta+n\left(1-\frac{1}{w}\right)\right)^q\right]$, $\frac{1}{1-q}\left[\frac{n}{(2m)^q}\left(\delta+n\left(1-\frac{1}{w}\right)\right)^q-1\right]$, and $-\frac{n}{2m}\left(\delta+n\left(1-\frac{1}{w}\right)\right)\log\left(\frac{\left(\delta+n\left(1-\frac{1}{w}\right)\right)}{2m}\right)$, respectively.
			
			A lower bound on the Sharma-Mittal entropy of $G$ can be constructed in terms of the number of its spanning subgraphs $S$. Let $nc(S)$ be the number of connected components in $S$. Then the least eigenvalue of $Q(G)$ \cite{de2011smallest}
			\begin{align}
			\mu_{1} \geq \left(\frac{n-1}{2m}\right)^{n-1}\sum\limits_{S}4^{nc(S)},
			\end{align} 
			where the sum is taken over all spanning subgraphs $S$ of $G$. It provides the following bound on the Sharma-Mittal entropy of a graph.
			\begin{lemma}
				The Sharma-Mittal entropy of a graph $G$ is bounded below by 
				\[\frac{1}{1-r}\left[\left[\frac{n}{(2m)^q}\left[\left(\frac{n-1}{2m}\right)^{n-1}\sum\limits_{S}4^{nc(S)}\right]^q\right]^{\frac{1-r}{1-q}}-1\right].\]
			\end{lemma}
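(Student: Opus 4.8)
The plan is to mirror the structure of the preceding $\rho_Q$-based upper-bound lemmas, but to run every inequality in the reverse direction using the just-stated lower bound on the least signless Laplacian eigenvalue $\mu_1$. First I would recall that the $\rho_Q$-spectrum consists of the numbers $\frac{\mu_i}{2m}$ for $i = 1, \dots, n$, so that $S_q(G) = \frac{1}{(2m)^q}S_{Q,q}(G)$ with $S_{Q,q}(G) = \sum_{i=1}^n \mu_i^q$ and $d = 2m$.

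The first substantive step is to bound $S_{Q,q}(G)$ from below using only the smallest eigenvalue. Since $Q(G)$ is positive semidefinite, every $\mu_i \geq \mu_1 \geq 0$, and because $q > 0$ the map $x \mapsto x^q$ is increasing on $[0,\infty)$, so $\mu_i^q \geq \mu_1^q$ for each $i$; summing the $n$ terms gives $S_{Q,q}(G) \geq n\mu_1^q$. I would then insert the stated estimate $\mu_1 \geq \left(\frac{n-1}{2m}\right)^{n-1}\sum\limits_{S} 4^{nc(S)}$, which is a nonnegative quantity, raised to the power $q > 0$ so as to preserve the direction of the inequality, to obtain
\[
S_q(G) = \frac{1}{(2m)^q}S_{Q,q}(G) \geq \frac{n}{(2m)^q}\left[\left(\frac{n-1}{2m}\right)^{n-1}\sum\limits_{S} 4^{nc(S)}\right]^q.
\]

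The final step is to transport this lower bound on $S_q(G)$ through the Sharma-Mittal formula (\ref{Sharma_Mittal_entropy}), and here lies the main obstacle: directional monotonicity. Writing $H_{q,r}$ as the function $x \mapsto \frac{1}{1-r}\left[x^{\frac{1-r}{1-q}} - 1\right]$ of $x = S_q(G)$, one computes its derivative to be $\frac{1}{1-q}x^{\frac{1-r}{1-q}-1}$, whose sign on $x > 0$ is exactly the sign of $1-q$. Thus the map is increasing precisely when $q < 1$, and in that regime the lower bound on $S_q(G)$ feeds directly into the asserted lower bound on $H_{q,r}(G)$; for $q > 1$ the monotonicity reverses, so I would either restrict the statement to $0 < q < 1$ or record that the displayed quantity then plays the role of an upper rather than a lower bound. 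Granting the appropriate regime, substituting the derived estimate for $S_q(G)$ into (\ref{Sharma_Mittal_entropy}) reproduces exactly the claimed expression, and the corresponding R\'enyi and Tsallis bounds follow as the $r \to 1$ and $r \to q$ limiting cases.
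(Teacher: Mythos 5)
Your proposal is correct and is essentially the argument the paper intends: the paper states this lemma without any proof, relying implicitly on exactly the chain you spell out, namely $\mu_i \geq \mu_1$ for all $i$, hence $S_{Q,q}(G) = \sum_{i=1}^n \mu_i^q \geq n\mu_1^q$, then insertion of the cited lower bound on $\mu_1$ and substitution into equation (\ref{Sharma_Mittal_entropy}) with $d = 2m$. The one place where you go beyond the paper is the monotonicity check at the end, and you are right to insist on it: the map $x \mapsto \frac{1}{1-r}\bigl[x^{\frac{1-r}{1-q}}-1\bigr]$ has derivative $\frac{1}{1-q}x^{\frac{1-r}{1-q}-1}$, so a lower bound on $S_q(G)$ transfers to a lower bound on $H_{q,r}(G)$ only when $0 < q < 1$; for $q > 1$ the inequality reverses and the displayed quantity becomes an upper bound. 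The paper states the lemma without this restriction (and the same unstated restriction is needed, with directions swapped, in its other bound lemmas), so your caveat is not a defect of your proof but a correction to the statement. Granting the regime $0 < q < 1$, your derivation reproduces the claimed expression exactly.
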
	
			
			The lower bounds of R{\'e}nyi, Tsallis and von-Neumann entropy are $\frac{1}{1-q}\log\left[\frac{n}{(2m)^q}\left[\left(\frac{n-1}{2m}\right)^{n-1}\sum\limits_{S}4^{nc(S)}\right]^q\right]$, \\ $\frac{1}{1-q}\left[\frac{n}{(2m)^q}\left[\left(\frac{n-1}{2m}\right)^{n-1}\sum\limits_{S}4^{nc(S)}\right]^q-1\right]$, and $-\frac{n}{2m}\left[\left(\frac{n-1}{2m}\right)^{n-1}\sum\limits_{S}4^{nc(S)}\right]\log\left[\frac{1}{2m}\left[\left(\frac{n-1}{2m}\right)^{n-1}\sum\limits_{S}4^{nc(S)}\right]\right]$, respectively.

	\section{Entropy of Product Graphs}
	
		Network modelling using product graphs \cite{hammack2011handbook} is an interesting method for generating complex networks which may capture the properties of real world networks. Consider two graphs $G_1$ and $G_2$ with $n_1$ and $n_2$ vertices, respectively. The $L$-eigenvalues of $G_1$ and $G_2$ are  $0 = \lambda^{(1)}_1 \leq \dots \leq \lambda^{(1)}_{n_1}$ and $0 \leq \lambda^{(2)}_1 \leq \dots \leq \lambda^{(2)}_{n_2}$, respectively. Similarly the signless-Laplacian eigenvalues of $G_1$ are $G_2$ are given by $0 = \mu^{(1)}_1 \leq \dots \leq \mu^{(1)}_{n_1}$ and $0 \leq \mu^{(2)}_1 \leq \dots \leq \mu^{(2)}_{n_2}$, respectively. The produch graph of $G_1$ and $G_2$ is denoted by $G$. A number of graph products has been studies in literature, for instance, the Kronecker product \cite{weichsel1962kronecker}, the corona product \cite{barik2015laplacian}, and many others. We calculate Sharma-Mittal entropy for some of them. 
			
		\textbf{Cartesian product:} The Cartesian product $G = G_1 \times G_2$ of two graphs $G_1$ and $G_2$ is a graph with vertex set $V(G_1) \times V(G_2)$ where two vertices $(u_i,v_j)$ and $(u_r,v_s)$ are adjacent in $G$ if either $u_i = u_r$ and $v_j \sim v_s$  or $u_i \sim u_r$ and $v_j = v_s$, where $\sim$ represents the adjacency relation of the respective vertices \cite{barik2015laplacian}.	
	
		\begin{lemma}
			The Sharma-Mittal entropy of the Cartesian product $G = G_1 \times G_2$ is given by,
			\begin{equation*}
			H_{q,r}(G) = \frac{1}{1-r}\left[ \left(\frac{1}{d}\right)^\frac{q(1-r)}{1-q} \left( \sum_{i=1}^{n_1} \sum_{j=1}^{n_2}(\lambda^{(1)}_i + \lambda^{(2)}_j )^q \right)^\frac{1-r}{1-q} - 1\right].
			\end{equation*}
		\end{lemma}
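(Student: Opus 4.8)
The plan is to reduce everything to the known spectral structure of the Cartesian product and then substitute directly into the definition of the Sharma-Mittal entropy. The essential fact I would invoke is that the Laplacian of the Cartesian product is a Kronecker sum, $L(G_1 \times G_2) = L(G_1) \otimes I_{n_2} + I_{n_1} \otimes L(G_2)$. This holds because both pieces of the product split accordingly: the vertex $(u_i, v_j)$ has degree $d_i^{(1)} + d_j^{(2)}$, so $D(G) = D(G_1)\otimes I_{n_2} + I_{n_1}\otimes D(G_2)$, and the adjacency relation defining the product gives $A(G) = A(G_1)\otimes I_{n_2} + I_{n_1}\otimes A(G_2)$; subtracting yields the Kronecker sum. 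Consequently, if $x$ and $y$ are eigenvectors of $L(G_1)$ and $L(G_2)$ with eigenvalues $\lambda^{(1)}_i$ and $\lambda^{(2)}_j$, then $x \otimes y$ is an eigenvector of $L(G)$ with eigenvalue $\lambda^{(1)}_i + \lambda^{(2)}_j$. Since such tensor products span the full space, the $L$-spectrum of $G$ is exactly the multiset $\{\lambda^{(1)}_i + \lambda^{(2)}_j : 1 \le i \le n_1,\ 1 \le j \le n_2\}$; this is the classical result recorded in the cited reference \cite{barik2015laplacian}.

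Given this spectrum, I would compute $S_{L,q}(G) = \sum_{i=1}^{n_1}\sum_{j=1}^{n_2}(\lambda^{(1)}_i + \lambda^{(2)}_j)^q$ directly. Writing $d$ for the degree of the product graph $G$, the eigenvalues of the density matrix $\rho_L(G)$ are the quantities $(\lambda^{(1)}_i + \lambda^{(2)}_j)/d$, so that $S_q(G) = \frac{1}{d^q}\sum_{i,j}(\lambda^{(1)}_i + \lambda^{(2)}_j)^q$. Substituting this into the defining equation (\ref{Sharma_Mittal_entropy}) and pulling the $d^{-q}$ factor out of the outer power produces the stated expression, using the identity $(d^{-q})^{(1-r)/(1-q)} = (1/d)^{q(1-r)/(1-q)}$.

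The only genuinely nontrivial ingredient is the Kronecker-sum identity for $L(G_1 \times G_2)$ together with the additivity of the spectrum that it forces; everything after that is formal manipulation of the entropy formula. I expect the only point requiring care to be the treatment of the degree $d$ of the product graph, since the statement leaves $d$ unsimplified rather than expressing it through $n_1, n_2, m_1, m_2$. Keeping $d$ symbolic throughout sidesteps any additional bookkeeping and reproduces the claimed form exactly.
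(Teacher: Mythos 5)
Your proposal is correct and follows essentially the same route as the paper: quote the fact that the $L$-spectrum of $G_1 \times G_2$ is $\{\lambda^{(1)}_i + \lambda^{(2)}_j\}$ and substitute it into the definition of $H_{q,r}$, factoring out $d^{-q}$. The only difference is that you additionally sketch a derivation of that spectral fact via the Kronecker-sum decomposition, whereas the paper simply cites it from the reference of Barik, Bapat, and Pati.
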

		\begin{proof}
			The eigenvalues of $L(G)$ are $(\lambda^{(1)}_i + \lambda^{(2)}_j )$ for $1 \leq i \leq n_1,1\leq j \leq n_2$ \cite{barik2015laplacian}. If $d$ is the degree of $G$ then Shrama-Mittal entropy is given by 
			\begin{equation}\label{eq:cartesian}
				\begin{split}
					H_{q,r}(G) & = \frac{1}{1-r}\left[  \left( \sum_{i=1}^{n_1} \sum_{j=1}^{n_2} \frac{(\lambda^{(1)}_i + \lambda^{(2)}_j )}{d}^q \right)^\frac{1-r}{1-q} - 1\right] \\
					& = \frac{1}{1-r}\left[ \left(\frac{1}{d}\right)^\frac{q(1-r)}{1-q} \left( \sum_{i=1}^{n_1} \sum_{j=1}^{n_2}(\lambda^{(1)}_i + \lambda^{(2)}_j )^q \right)^\frac{1-r}{1-q} - 1\right].
				\end{split}
			\end{equation}
		\end{proof}	
			
		As the limiting cases of the Sharma-Mittal entropy the R{\'e}nyi, Tsallis, and von-Neumann entropy of the product graph $G = G_1 \times G_2$ are respectively given by
		\begin{align}
			\begin{split}
				& H_q^{(R)}(G) = \frac{1}{1-q}\log\left[ \left(\frac{1}{d}\right)^q \left( \sum_{i=1}^{n_1} \sum_{j=1}^{n_2}(\lambda^{(1)}_i + \lambda^{(2)}_j )^q \right)\right], \\
				& H_q^{(T)}(G) = \frac{1}{1-q}\left[ \left(\frac{1}{d}\right)^q \left( \sum_{i=1}^{n_1} \sum_{j=1}^{n_2}(\lambda^{(1)}_i + \lambda^{(2)}_j )^q \right)-1\right], ~\text{and} \\
				& H(G) = -\left[  \left( \sum_{i=1}^{n_1} \sum_{j=1}^{n_2}\frac{(\lambda^{(1)}_i + \lambda^{(2)}_j )}{d}\log\left(\frac{(\lambda^{(1)}_i + \lambda^{(2)}_j )}{d}\right) \right)\right].
			\end{split}
		\end{align}		
		\textbf{Kronecker product:} The Kronecker product $G = G_1 \otimes G_2$ of two graphs $G_1$ and $G_2$ is a graph with vertex set $V(G_1) \times V(G_2)$ where two vertices $(u_i,v_j)$ and $(u_r,v_s)$ are adjacent in $G_1 \otimes G_2$ if either $u_i \sim u_r$ and $v_j \sim v_s$ \cite{barik2015laplacian}.
	
		\begin{lemma}
			Let $G_1$ and $G_2$ be connected regular graphs with regularities $K$ and $S$, respectively. Then the Sharma-Mittal entropy of $G = G_1 \otimes G_2$ is given by,
			$$H_{q,r}(G) = \frac{1}{1-r} \left[  \left( \frac{1}{d} \right) ^\frac{q(1-r}{1-q}
			\left(\sum_{i=1}^{n_1}\sum_{j=1}^{n_2}\left(K\lambda^{(2)}_j + \lambda^{(1)}_i S - \lambda^{(1)}_i\lambda^{(2)}_j\right)^q
			\right)^ \frac{1-r}{1-q}
			- 1\right].$$		
		\end{lemma}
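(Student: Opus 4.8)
The plan is to compute the $L$-spectrum of $G = G_1 \otimes G_2$ explicitly and then substitute it into the definition (\ref{Sharma_Mittal_entropy}), exactly as was done for the Cartesian product. The starting point is the degree computation: in the Kronecker product a vertex $(u_i,v_j)$ is adjacent to $(u_r,v_s)$ precisely when $u_i \sim u_r$ and $v_j \sim v_s$, so its degree is $d_i^{(1)} d_j^{(2)}$. Since $G_1$ is $K$-regular and $G_2$ is $S$-regular, every vertex has degree $KS$; hence $G$ is $KS$-regular and its degree is $d = \trace{L(G)} = n_1 n_2 KS$. This pins down the scalar $\frac{1}{d}$ that appears in the density matrix $\rho_L(G)$ and in the claimed formula.

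Next I would write the Laplacian of the product in Kronecker form. Because $A(G) = A(G_1)\otimes A(G_2)$ and $D(G) = KS\,I_{n_1 n_2}$, we have
\[
L(G) = KS\,(I_{n_1}\otimes I_{n_2}) - A(G_1)\otimes A(G_2).
\]
Here regularity does the essential work: for a $K$-regular graph $L(G_1) = K I - A(G_1)$, so $L(G_1)$ and $A(G_1)$ are simultaneously diagonalizable and the adjacency eigenvalue paired with the Laplacian eigenvalue $\lambda^{(1)}_i$ is $K - \lambda^{(1)}_i$; likewise $S - \lambda^{(2)}_j$ for $G_2$. The eigenvalues of a Kronecker product $A(G_1)\otimes A(G_2)$ are the pairwise products $(K-\lambda^{(1)}_i)(S-\lambda^{(2)}_j)$, with eigenvectors the tensor products of the factor eigenvectors. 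Consequently the eigenvalues of $L(G)$ are
\[
KS - (K-\lambda^{(1)}_i)(S-\lambda^{(2)}_j) = K\lambda^{(2)}_j + S\lambda^{(1)}_i - \lambda^{(1)}_i\lambda^{(2)}_j,
\]
for $1 \leq i \leq n_1$ and $1 \leq j \leq n_2$. Dividing each by $d$ yields the spectrum of $\rho_L(G)$.

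Finally I would substitute these eigenvalues into (\ref{Sharma_Mittal_entropy}): forming $S_q(G) = \frac{1}{d^q}\sum_{i,j}(K\lambda^{(2)}_j + S\lambda^{(1)}_i - \lambda^{(1)}_i\lambda^{(2)}_j)^q$ and pulling the common factor out of the sum exactly as in the Cartesian case (\ref{eq:cartesian}) reproduces the stated closed form. The one point requiring care — the main, though mild, obstacle — is the algebraic simplification $KS - (K-\lambda^{(1)}_i)(S-\lambda^{(2)}_j)$, which is the heart of the computation and the only place regularity is genuinely used; and one should note that these eigenvalues are automatically nonnegative since $L(G)$ is positive semidefinite, so the $q$-th powers in $S_q(G)$ are well defined and the substitution is legitimate.
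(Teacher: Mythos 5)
Your proposal is correct, and its endgame is the same as the paper's: substitute the $L$-spectrum of $G_1\otimes G_2$ into the definition of $H_{q,r}$ and pull the factor $1/d$ out of the sum. The difference is in how the spectrum is obtained. The paper simply imports the fact that the eigenvalues of $L(G_1\otimes G_2)$ are $K\lambda^{(2)}_j + S\lambda^{(1)}_i - \lambda^{(1)}_i\lambda^{(2)}_j$ from \cite{barik2015laplacian} and does no further work, whereas you derive it from first principles: $A(G_1\otimes G_2)=A(G_1)\otimes A(G_2)$, regularity gives $D(G)=KS\,I$ and makes $L(G_k)$ and $A(G_k)$ simultaneously diagonalizable with adjacency eigenvalues $K-\lambda^{(1)}_i$ and $S-\lambda^{(2)}_j$, and the tensor-product eigenvalue rule then yields $KS-(K-\lambda^{(1)}_i)(S-\lambda^{(2)}_j)$, which expands to the stated expression. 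Your version is self-contained and makes visible exactly where the regularity hypothesis is used (the paper's citation-based proof hides this), at the cost of a slightly longer argument; both are valid. Your remarks that $d=n_1n_2KS=2m$ and that the eigenvalues are nonnegative (so the $q$-th powers are well defined) are correct and are details the paper leaves implicit.
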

		\begin{proof}
			The eigenvalues of $L(G)$ are $K\lambda^{(2)}_j + \lambda^{(1)}_i S - \lambda^{(1)}_i\lambda^{(2)}_j$ for $i=1,\dots,n_1$ and $j=1,\dots,n_2$ \cite{barik2015laplacian}. If $d$ is the degree of $G$ then Sharma-Mittal entropy is given by
			\begin{align}\label{eq:kronecker}
				\begin{split}
					H_{q,r}(G) &= \frac{1}{1-r} \left[	\left(	\sum_{i=1}^{n_1}\sum_{j=1}^{n_2}\left(\frac{K\lambda^{(2)}_j + \lambda^{(1)}_i S - \lambda^{(1)}_i\lambda^{(2)}_j}{d}\right)^q \right)^ \frac{1-r}{1-q} - 1\right] \\
					&= \frac{1}{1-r} \left[  \left( \frac{1}{d} \right) ^\frac{q(1-r}{1-q}
					\left(\sum_{i=1}^{n_1}\sum_{j=1}^{n_2}\left(K\lambda^{(2)}_j + \lambda^{(1)}_i S - \lambda^{(1)}_i\lambda^{(2)}_j\right)^q
					\right)^\frac{1-r}{1-q}- 1\right].
				\end{split}
			\end{align}
		\end{proof}
		
		The R{\'e}nyi entropy, Tsallis entropy and von-Neumann entropy of $G=G_1\otimes G_2$ are 
		\begin{align}
			\begin{split}
				& H_q^{(R)}(G) = \frac{1}{1-q}\log \left[  \left( \frac{1}{d} \right) ^q
				\left(\sum_{i=1}^{n_1}\sum_{j=1}^{n_2}\left(K\lambda^{(2)}_j + \lambda^{(1)}_i S - \lambda^{(1)}_i\lambda^{(2)}_j\right)^q
				\right)\right],\\				
				& H_q^{(T)}(G) = \frac{1}{1-q} \left[  \left( \frac{1}{d} \right) ^q
				\left(\sum_{i=1}^{n_1}\sum_{j=1}^{n_2}\left(K\lambda^{(2)}_j + \lambda^{(1)}_i S - \lambda^{(1)}_i\lambda^{(2)}_j\right)^q
				\right)-1\right] \text{and} \\				
				& H(G) = -\log \left[  
				\left(\sum_{i=1}^{n_1}\sum_{j=1}^{n_2} \left(\frac{K\lambda^{(2)}_j + \lambda^{(1)}_i S - \lambda^{(1)}_i\lambda^{(2)}_j}{d}\right) \log\left(\frac{K\lambda^{(2)}_j + \lambda^{(1)}_i S - \lambda^{(1)}_i\lambda^{(2)}_j}{d}\right)
				\right)\right].
			\end{split}
		\end{align}		
		
		\textbf{Strong Product:}
			The Strong Product $G = G_1 \boxtimes G_2$ of two graphs $G_1$ and $G_2$ is a graph with vertex set $V(G_1) \times V(G_2)$ where two vertices $(u_i,v_j)$ and $(u_r,v_s)$ are adjacent if either $u_i = u_r$ and $v_j \sim v_s$ in $G_2$; or $u_i \sim u_r$ in $G_1$ and $v_j=v_s$; or $u_i \sim u_r$ in $G_1$ and $v_j \sim v_s$ in $G_2$ \cite{barik2015laplacian}.
		\begin{lemma}
			Let $G_1$ and $G_2$ be connected regular graphs with regularities $K$ and $S$, respectively. Then the Sharma-Mittal entropy $G= G_1\boxtimes G_2$ is given by,
			$${H_{r,q}}(G) = {1 \over {1 - r}}\left[ {{{\left( {{1 \over d}} \right)}^{q{{1 - r} \over {1 - q}}}}{{\left( {\sum\limits_{1 = 1}^{n_1} {\sum\limits_{j = 1}^{n_2} {{{\left( {(1 + S){\lambda^{(1)} _i} + (1 + K){\lambda^{(2)}_j} - {\lambda^{(1)}_i}{\lambda^{(2)}_j}} \right)}^q}} } } \right)}^{{{1 - r} \over {1 - q}}}} - 1} \right].$$
		\end{lemma}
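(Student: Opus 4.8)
The plan is to reuse the template of the Cartesian and Kronecker lemmas: first determine the $L$-spectrum of $G = G_1 \boxtimes G_2$, then substitute it into the defining formula (\ref{Sharma_Mittal_entropy}) and carry out the exponent bookkeeping.

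First I would write down the adjacency matrix of the strong product. Directly from the adjacency rule, $A(G) = A_1 \otimes I_{n_2} + I_{n_1} \otimes A_2 + A_1 \otimes A_2$, where $A_1 = A(G_1)$ and $A_2 = A(G_2)$. Since $G_1$ is $K$-regular and $G_2$ is $S$-regular, every vertex of $G$ has the common vertex-degree $K + S + KS$, so the degree matrix is the scalar matrix $D(G) = (K + S + KS)\, I_{n_1 n_2}$, and hence $L(G) = (K + S + KS)\, I - A(G)$.

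Next I would diagonalize simultaneously. Regularity gives $A_1 = K I - L_1$ and $A_2 = S I - L_2$, so $A_1$ has eigenvalues $K - \lambda^{(1)}_i$ and $A_2$ has eigenvalues $S - \lambda^{(2)}_j$. All three summands of $A(G)$ are polynomials in the commuting operators $A_1 \otimes I$ and $I \otimes A_2$, so they share a common eigenbasis, and the eigenvalues of $A(G)$ are $(K - \lambda^{(1)}_i) + (S - \lambda^{(2)}_j) + (K - \lambda^{(1)}_i)(S - \lambda^{(2)}_j)$. Subtracting this from $K + S + KS$ and expanding, the constant terms $K$, $S$, and $KS$ all cancel and I am left with the $L$-eigenvalue $(1+S)\lambda^{(1)}_i + (1+K)\lambda^{(2)}_j - \lambda^{(1)}_i \lambda^{(2)}_j$, one for each pair $(i,j)$. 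Equivalently this spectrum can simply be quoted from \cite{barik2015laplacian}, exactly as in the two preceding lemmas.

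Finally, writing $d$ for the degree of $G$ (so that $\rho_L(G) = L(G)/d$ is trace-one), the eigenvalues of $\rho_L(G)$ are these quantities divided by $d$. Substituting into $S_q(G) = \sum_{i,j} \gamma^q$ and then into (\ref{Sharma_Mittal_entropy}) produces the double sum in the statement; factoring $1/d$ out of every term and collecting it contributes $\left(\frac{1}{d}\right)^{q \frac{1-r}{1-q}}$ once the inner sum is raised to the power $\frac{1-r}{1-q}$, which is exactly the claimed closed form. The only step with any content is the spectral computation of the middle paragraphs — in particular checking that the cross terms cancel — while the substitution and exponent manipulation are verbatim copies of the Cartesian and Kronecker arguments.
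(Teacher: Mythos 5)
Your proposal is correct and follows the same route as the paper: identify the $L$-spectrum of $G_1 \boxtimes G_2$ as $(1+S)\lambda^{(1)}_i + (1+K)\lambda^{(2)}_j - \lambda^{(1)}_i\lambda^{(2)}_j$, divide by the degree $d$, and substitute into equation (\ref{Sharma_Mittal_entropy}). The only difference is that you derive the spectrum from the decomposition $A(G) = A_1 \otimes I + I \otimes A_2 + A_1 \otimes A_2$ together with regularity, whereas the paper simply quotes it from \cite{barik2015laplacian}; your verification that the cross terms cancel is a correct and welcome addition, not a divergence.
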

		\begin{proof}
			The eigenvalues of $L(G)$ are $(1 + S)\lambda^{(1)}_i + (1 + K)\lambda^{(2)}_j - \lambda^{(1)}_i+\lambda^{(2)}_j $, for $i = 1, \dots, n_1$ and $j = 1,\dots,n_2$ \cite{barik2015laplacian}. If $d$ is the degree of $G$ then the Sharma-Mittal entropy is given by	
			\begin{align}\label{eq:strong}
				\begin{split}
					{H_{r,q}}(G) & = {1 \over {1 - r}}\left[ {{{\left( {\sum\limits_{1 = 1}^{n_1} {\sum\limits_{j = 1}^{n_2} {{{\left( {({{1 + S){\lambda^{(1)} _i} + (1 + K){\lambda^{(2)} _j} - {\lambda^{(1)} _i}{\lambda^{(2)}_j}} \over d}} \right)}^q}} } } \right)}^{{{1 - r} \over {1 - q}}}} - 1} \right] \\
					& = {1 \over {1 - r}}\left[ {{{\left( {\sum\limits_{1 = 1}^{n_1} {\sum\limits_{j = 1}^{n_2} {{{\left( {({{1 + S){\lambda^{(1)} _i} + (1 + K){\lambda^{(2)} _j} - {\lambda^{(1)} _i}{\lambda^{(2)} _j}} \over d}} \right)}^q}} } } \right)}^{{{1 - r} \over {1 - q}}}} - 1} \right].
				\end{split}
			\end{align}
		\end{proof}

		As the limiting cases of the equation \ref{eq:strong}, the R{\'e}nyi, Tsallis, and von-Neumann entropy of $G=G_1\times G_2$ are
		\begin{equation}
		`	\begin{split}
				& H_q^{(R)}(G) = {1 \over {1 - q}}\log \left[ {{{\left( {{1 \over d}} \right)}^q}\left( {\sum\limits_{1 = 1}^{n_1} {\sum\limits_{j = 1}^{n_2} {{{\left( {(1 + S){\lambda^{(1)} _i} + (1 + K){\lambda^{(2)} _j} - {\lambda^{(1)} _i}{\lambda^{(2)} _j}} \right)}^q}} } } \right)} \right], \\
				& H_q^{(T)}(G) = {1 \over {1 - q}}\left[ {{{\left( {{1 \over d}} \right)}^q}\sum\limits_{1 = 1}^{n_1} {\sum\limits_{j = 1}^{n_2} {{{\left( {(1 + S){\lambda^{(1)}_i} + (1 + K){\lambda^{(2)}_j} - {\lambda^{(1)}_i}{\lambda^{(2)}_j}} \right)}^q}} }  - 1} \right], ~\text{and}	\\
				& {H}(G) = \sum\limits_{1 = 1}^{n_1} \sum\limits_{j = 1}^{n_2} \left( {{{(1 + S){\lambda^{(1)}_i} + (1 + K){\lambda^{(2)}_j} - {\lambda^{(1)}_i}{\lambda^{(2)}_j}} \over d}} \right) \log \left( {{ d \over {(1 + S){\lambda^{(1)}_i} + (1 + K){\lambda^{(2)}_j} - {\lambda^{(1)}_i}{\lambda^{(2)}_j}} }} \right)  .
			\end{split} 
		\end{equation} 
			
	
		\textbf{Lexicographic Product:} The Lexicographic Product $G = G_1\bullet G_2$ of two graphs $G_1$ and $G_2$ is a graph with vertex set $V(G_1) \times V(G_2)$ where two vertices $(u_i,v_j)$ and $(u_r,v_s)$ are adjacent if either $u_i \sim u_r$ in $G_1$ or $u_i = u_r$ and $ v_j \sim v_s$ in $G_2$ \cite{barik2015laplacian}.
		\begin{theorem}
			Let $G_1$ be a connected graph with vertices $u_1,\dots,u_{n_1}$ where degree of vertex $u_i$ is $d(u_i)$. Let $G = G_1\bullet G_2$ be the Lexicographic Product of $G_1$ are any graph $G_2$. Then the Sharma-Mittal entropy of $G$ is given by
			
			\[{H_{r,q}}(G) = {1 \over {1 - r}}\left[ {{{\left( {{1 \over d}} \right)}^{{{q(1 - r)} \over {1 - q}}}}{{\left( {\sum\limits_{i = 1}^{n_1} {{{\left( {{\lambda^{(1)}_i}{n_2}} \right)}^q} + \sum\limits_{j = 2}^{n_2} {\sum\limits_{i = 1}^{n_1} {{{\left( {{\lambda^{(2)}_j} + d({u_i}){n_2}} \right)}^q}} } } } \right)}^{{{1 - r} \over {1 - q}}}} - 1} \right].\]
		\end{theorem}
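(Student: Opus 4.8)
The plan is to follow exactly the template established for the Cartesian, Kronecker, and strong products: invoke the known Laplacian spectrum of the lexicographic product, assemble the power-sum $S_{L,q}(G)$, convert it to the density-matrix power-sum $S_q(G)$ by dividing each eigenvalue by the degree $d$ of $G$, and then substitute into the definition (\ref{Sharma_Mittal_entropy}) of the Sharma-Mittal entropy.

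First I would record the Laplacian spectrum of $G = G_1 \bullet G_2$. From \cite{barik2015laplacian}, when $G_1$ is connected the $n_1 n_2$ eigenvalues of $L(G)$ split into two families: the $n_1$ values $n_2 \lambda^{(1)}_i$ for $i = 1, \dots, n_1$, and the $n_1(n_2 - 1)$ values $\lambda^{(2)}_j + n_2\, d(u_i)$ for $j = 2, \dots, n_2$ and $i = 1, \dots, n_1$. The inner index $j$ begins at $2$ because the zero eigenvalue $\lambda^{(2)}_1 = 0$ of $G_2$ is precisely what, combined with the constant-on-copies eigenvectors, produces the first family; the two families together account for all $n_1 + n_1(n_2 - 1) = n_1 n_2$ eigenvalues, as required.

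Next I would form the power-sum. Since $S_q(G) = \frac{1}{d^q} S_{L,q}(G)$, with the eigenvalues of $\rho_L(G)$ obtained by dividing each Laplacian eigenvalue by $d$, I get
\[
S_q(G) = \frac{1}{d^q}\left[ \sum_{i=1}^{n_1} \left(\lambda^{(1)}_i n_2\right)^q + \sum_{j=2}^{n_2}\sum_{i=1}^{n_1} \left(\lambda^{(2)}_j + d(u_i) n_2\right)^q \right].
\]
Substituting this into (\ref{Sharma_Mittal_entropy}) and raising to the exponent $\frac{1-r}{1-q}$ pulls the scalar factor out as $\left(\frac{1}{d}\right)^{\frac{q(1-r)}{1-q}}$, reproducing the claimed expression. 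The R{\'e}nyi, Tsallis, and von-Neumann forms then follow as the $r \to 1$, $r \to q$, and $(q,r) \to (1,1)$ limits, exactly as in the earlier product cases.

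The simplification itself is a routine substitution once the spectrum is in hand, so the only point genuinely requiring care is the correct statement and application of the lexicographic-product Laplacian spectrum. In particular, I would verify the asymmetric roles of $G_1$ and $G_2$ — the dependence on the individual degrees $d(u_i)$ rather than a single regularity parameter (as appeared in the Kronecker and strong product lemmas), and the connectivity hypothesis on $G_1$ that guarantees the $n_2 \lambda^{(1)}_i$ family is exactly the set of eigenvalues arising from the constant-on-copies eigenvectors. This is the step I would check most carefully before carrying out the mechanical algebra.
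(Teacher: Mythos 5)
Your proposal is correct and follows essentially the same route as the paper: cite the lexicographic-product Laplacian spectrum from Barik--Bapat--Pati (the family $\lambda^{(1)}_i n_2$ together with $\lambda^{(2)}_j + d(u_i)n_2$ for $j \geq 2$), divide by the degree $d$ to pass to $\rho_L(G)$, and substitute the resulting power sum into the definition of $H_{q,r}$, extracting the factor $\left(\tfrac{1}{d}\right)^{\frac{q(1-r)}{1-q}}$. Your additional check that the two families account for all $n_1 n_2$ eigenvalues is a sensible verification the paper omits, but it does not change the argument.
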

		\begin{proof}
			The eigenvalues of $L(G)$ are given by \cite{barik2015laplacian}
			\begin{enumerate}
				\item[(a)] $\lambda^{(1)}_i n_2$, for $i=1,\dots,n_1$; and
				\item[(b)] $\lambda^{(2)}_j + d(u_i){n_2}$ for $i=1,2,\dots,n_1$ and $j=2,3,\dots,n_2$.
			\end{enumerate}
			If $d$ is the degree of $G$ then the Sharma-Mittal entropy is given by,
			\begin{align}\label{eq:lexico}
				\begin{split} 
					{H_{r,q}}(G) & = {1 \over {1 - r}}\left[ {{{\left( {\sum\limits_{i = 1}^{n_1} {{{\left( {{{{\lambda^{(1)} _i}{n_2}} \over d}} \right)}^q} + \sum\limits_{j = 2}^{n_2} {\sum\limits_{i = 1}^{n_1} {{{\left( {{{{\lambda^{(2)}_j} + d({u_i}){n_2}} \over d}} \right)}^q}} } } } \right)}^{{{1 - r} \over {1 - q}}}} - 1} \right] \\
					& = {1 \over {1 - r}}\left[ {{{\left( {{1 \over d}} \right)}^{{{q(1 - r)} \over {1 - q}}}}{{\left( {\sum\limits_{i = 1}^{n_1} {{{\left( {{\lambda^{(1)}_i}{n_2}} \right)}^q} + \sum\limits_{j = 2}^{n_2} {\sum\limits_{i = 1}^{n_1} {{{\left( {{\lambda^{(2)}_j} + d({u_i}){n_2}} \right)}^q}} } } } \right)}^{{{1 - r} \over {1 - q}}}} - 1} \right].
				\end{split} 
			\end{align}
		\end{proof}
	
		The R{\'e}nyi, Tsallis and von-Neumann entropy of $G = G_1\bullet G_2$ are
		\begin{align}
			\begin{split}
				H_q^{(R)}(G) &= {1 \over {1 - q}}\log \left[ {{{\left( {{1 \over d}} \right)}^q}\left( {\sum\limits_{i = 1}^{n_1} {{{\left( {{\lambda^{(1)}_i}{n_2}} \right)}^q} + \sum\limits_{j = 2}^{n_2} {\sum\limits_{i = 1}^{n_1} {{{\left( {{\lambda^{(2)}_j} + d({u_i})n_2} \right)}^q}} } } } \right)} \right], \\
				H_q^{(T)}(G) &= {1 \over {1 - q}}\left[ {{{\left( {{1 \over d}} \right)}^q}\left( {\sum\limits_{i = 1}^{n_1} {{{\left( {{\lambda^{(1)}_i}n_2} \right)}^q} + \sum\limits_{j = 2}^{n_2} {\sum\limits_{i = 1}^{n_1} {{{\left( {{\lambda^{(2)}_j} + d({u_i})n_2} \right)}^q}} } } } \right) - 1} \right], ~\text{and} \\
				H(G) &=  - \left[ {\sum\limits_{i = 1}^{n_1} {\left( {{\lambda^{(1)}_i}n_2} \right)} \log \left( {{\lambda^{(1)}_i}n_2} \right) + \sum\limits_{j = 2}^{n_2} {\sum\limits_{i = 1}^{n_1} {\left( {{\lambda^{(2)}_j} + d({u_i})n_2} \right)\log \left( {{\lambda^{(2)}_j} + d({u_i})n_2} \right)} } } \right].\\
			\end{split}
		\end{align}
		
		\textbf{Corona product:} The corona product of two graphs $G_1$ and $G_2$ with $n_1$ and $n_2$ number of vertices is given by $G=G_1\circ G_2$. It is obtained by taking $n_1$ copies of $G_2$ and joining $i^{th}$ copy of $G_2$ with $i^{th}$ vertex of $G_1$, for each $i \in \{ 1,2,\dots,n_1\} $. An example of the corona product of two graphs $P_2$ and $P_3$ is shown in figure\ref{CP1}. 
		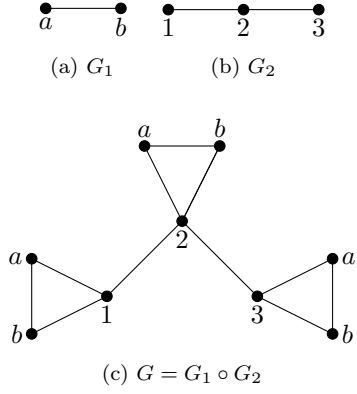
\begin{figure}[!h]						
			\centering
			\subfloat[$G_1$]{
				\centering
				\begin{tikzpicture}
				\draw [fill] (0, 0) circle [radius=0.07];
				\node [below] at (0, 0) {$a$};
				\draw [fill] (1, 0) circle [radius=0.07];
				\node [below] at (1, 0) {$b$};
				\draw (0,0) --(1,0);
				\end{tikzpicture}
			}
			\subfloat[$G_2$]{
				\centering
				\begin{tikzpicture}
				\draw [fill] (0, 0) circle [radius=0.07];
				\node [below] at (0, 0) {$1$};
				\draw [fill] (1, 0) circle [radius=0.07];
				\node [below] at (1, 0) {$2$};
				\draw [fill] (2, 0) circle [radius=0.07];
				\node [below] at (2, 0) {$3$};
				\draw (0,0) --(1,0);
				\draw (1, 0) -- (2, 0);
				\end{tikzpicture}
			}		\\
			\subfloat[$G = G_1 \circ G_2$]{
				\centering
				\begin{tikzpicture}
				\draw [fill] (0, -0.5) circle [radius = 0.07];				
				\node [left] at (0, -0.5) {$b$};
				\draw [fill] (1, 0) circle [radius = 0.07];
				\node [below] at (1, 0) {$1$};
				\draw (0,-0.5) -- (1, 0);
				\draw [fill] (0, 0.5) circle [radius = 0.07];
				\node [left] at (0, .5) {$a$};
				\draw (0,-0.5) -- (0, .5);
				\draw (0,0.5) -- (1, 0);
				\draw [fill] (2, 1) circle [radius = 0.07];
				\node [below] at (2, 1) {$2$};
				\draw (2,1) -- (1, 0);
				\draw [fill] (1.5, 2) circle [radius = 0.07];
				\node [above] at (1.5, 2) {$a$};
				\draw (1.5, 2) -- (2.5, 2);
				\draw [fill] (2.5, 2) circle [radius = 0.07];
				\node [above] at (2.5, 2) {$b$};
				\draw  (2, 1) -- (1.5, 2);
				\draw  (2, 1) -- (2.5, 2);
				\draw  (2, 1) -- (3, 0);
				\draw [fill] (3, 0) circle [radius = 0.07];
				\node [below] at (3, 0) {$3$};
				\draw  (3, 0) -- (4, -0.5);
				\draw  (3, 0) -- (4, 0.5);
				\draw  (2, 1) -- (2.5, 2);
				\draw (4, 0.5) -- (4, -0.5);
				\draw [fill] (4, -0.5) circle [radius = 0.07];
				\node [right] at (4,-0.5) {$b$};
				\draw [fill] (4, 0.5) circle [radius = 0.07];
				\node [right] at (4,0.5) {$a$};
				\end{tikzpicture}
			}
			\caption{The corona product of $G_1$ and $G_2$, $G_1\circ G_2$, after the first iteration is shown in (c) with newly added edges \label{CP1}.}
		\end{figure}
	
		\textbf{Corona Graph:} Let $G=(V,E)$ be the given graph. The corona graph, of the seed graph $G^{(0)} = G$, is give by $G^{(k + 1)} = G^{(k)} \circ G^{(0)}$, where $k > 0$ is the number of iterations of the corona product. The total number of vertices in $G^{(k)}$ are $V(1+V)^{k}$. In \cite{sharma2017structural}, it is proved that the degree distribution of $G^{(k)}$ is a power law degree distribution, which fulfills the conditions for being real world network. It is also proved that the Laplacian matrix $L(G)$, is a matrix of order $(n_1+n_1n_2)$  whose eigenvalues are given by
		\begin{enumerate}
			\item[(a)] $\frac{\lambda^{(1)}_i + n_2 + 1 \pm \sqrt{(\lambda^{(1)}_i + n_2 + 1)^2-4\lambda^{(1)}_i}}{2}$ with multiplicity $1$ for $i=1,2,\dots,n_1$, and
			\item[(b)] $\lambda^{(2)}_j+1$ with multiplicity $n_1$  for $j=2,\dots,n_2$.
		\end{enumerate}
		The result indicates the following lemma.
		\begin{lemma}
			Let $G = G_1\circ G_2$ be the Corona Product of $G_1$ and $G_2$. Then Sharma-Mittal entropy of $G$ is given by,		
			\begin{align*}
			\begin{split}		
				H_{q,r}(G) = &\frac{1}{1-r}\left[\left(\frac{1}{d}\right)^{\frac{q(1-r)}{1-q}}\left(\sum\limits_{i=1}^{n_1}\frac{\lambda^{(1)}_i + n_2 + 1 \pm \sqrt{(\lambda^{(1)}_i + n_2 +  1)^2-4\lambda^{(1)}_i}}{2}\right)^{\frac{q(1-r)}{1-q}} \right]\\
				&+ \frac{1}{1-r} \left( \left(\frac{1}{d}\right)^{\frac{q(1-r)}{1-q}} \left(\sum\limits_{j=1}^{n_2}n_1\left(\lambda^{(2)}_j+1\right)\right)^{\frac{q(1-r)}{1-q}}-1\right).		
			\end{split}
			\end{align*}
		\end{lemma}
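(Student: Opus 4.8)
The plan is to follow exactly the template already used for the Cartesian, Kronecker, Strong and Lexicographic products: I take the explicit Laplacian spectrum of $G = G_1 \circ G_2$ recalled immediately above the statement, substitute it into the defining equation (\ref{Sharma_Mittal_entropy}), and then extract the common factor $(1/d)^{q(1-r)/(1-q)}$ from the outer power. Recall from Section 2 that $S_q(G) = \frac{1}{d^q}S_{L,q}(G)$, where $S_{L,q}(G) = \sum_i \lambda_i^q$ is the sum of the $q$-th powers of the Laplacian eigenvalues; hence $H_{q,r}(G) = \frac{1}{1-r}\bigl[(1/d)^{q(1-r)/(1-q)}\,(S_{L,q}(G))^{(1-r)/(1-q)} - 1\bigr]$, so the whole task reduces to writing $S_{L,q}(G)$ for the corona spectrum and simplifying.

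First I would assemble $S_{L,q}(G)$ from the two families (a) and (b) listed before the lemma. Family (a) contributes, for each $i=1,\dots,n_1$, the two eigenvalues coming from the $+$ and $-$ branches of the $\pm$, so each index $i$ supplies a \emph{pair} of $q$-th powers; family (b) contributes $\lambda^{(2)}_j + 1$ with multiplicity $n_1$ for $j=2,\dots,n_2$. Collecting these yields
\[
S_{L,q}(G) = \sum_{i=1}^{n_1} \left[ \left( \frac{\lambda^{(1)}_i + n_2 + 1 + \sqrt{(\lambda^{(1)}_i + n_2 + 1)^2 - 4\lambda^{(1)}_i}}{2} \right)^{q} + \left( \frac{\lambda^{(1)}_i + n_2 + 1 - \sqrt{(\lambda^{(1)}_i + n_2 + 1)^2 - 4\lambda^{(1)}_i}}{2} \right)^{q} \right] + n_1 \sum_{j=2}^{n_2} \left(\lambda^{(2)}_j + 1\right)^{q}.
\]
Inserting this into $H_{q,r}(G)$ and factoring the $1/d$ power out of $(S_q(G))^{(1-r)/(1-q)}$ then produces the stated closed form.

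The step requiring the most care — rather than a genuine obstacle — is the bookkeeping behind the $\pm$ symbol and the multiplicities: in family (a) the $\pm$ encodes two distinct eigenvalues per index $i$ (not one ambiguous value), while each family-(b) eigenvalue is repeated $n_1$ times and the index $j$ begins at $2$. A useful consistency check is that the total eigenvalue count $2n_1 + n_1(n_2-1) = n_1(n_2+1)$ matches the stated order $(n_1 + n_1 n_2)$ of $L(G)$, which certifies that no eigenvalue is omitted or counted twice. Once this accounting is fixed, the remaining algebra is identical to the earlier product-graph lemmas and introduces no new idea, and the limiting R{\'e}nyi, Tsallis, and von-Neumann expressions follow as before by taking $r \to 1$, $r \to q$, and $(q,r) \to (1,1)$ in the resulting formula.
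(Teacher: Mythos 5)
Your approach is the right one and is essentially the paper's own: the paper supplies no written proof of this lemma, merely listing the Laplacian spectrum of $G_1\circ G_2$ beforehand and asserting that ``the result indicates the following lemma,'' so assembling $S_{L,q}(G)$ from the two eigenvalue families and substituting into equation (\ref{Sharma_Mittal_entropy}) is exactly the intended argument. Your expression for $S_{L,q}(G)$ --- both branches of the $\pm$ counted separately for each $i$, multiplicity $n_1$ on family (b), the index $j$ starting at $2$, and the count $2n_1+n_1(n_2-1)=n_1+n_1n_2$ --- is correct.

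The one genuine problem is your closing claim that this ``produces the stated closed form'': it does not. Substituting your (correct) $S_{L,q}(G)$ into the definition gives
\begin{equation*}
H_{q,r}(G)=\frac{1}{1-r}\left[\left(\frac{1}{d}\right)^{\frac{q(1-r)}{1-q}}\left(\sum_{i=1}^{n_1}\sum_{\pm}\left(\frac{\lambda^{(1)}_i+n_2+1\pm\sqrt{(\lambda^{(1)}_i+n_2+1)^2-4\lambda^{(1)}_i}}{2}\right)^{q}+n_1\sum_{j=2}^{n_2}\left(\lambda^{(2)}_j+1\right)^{q}\right)^{\frac{1-r}{1-q}}-1\right],
\end{equation*}
whereas the displayed lemma (i) omits the exponent $q$ on the individual eigenvalues inside the sums, (ii) raises the two partial sums to the outer exponent \emph{separately} and adds the results, which is not what $\left(S_q(G)\right)^{\frac{1-r}{1-q}}$ means and is algebraically invalid since $(A+B)^{p}\neq A^{p}+B^{p}$ in general, and (iii) runs the second sum from $j=1$ rather than $j=2$. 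So the formula you actually derive is the correct one and disagrees with the printed statement; you should say so explicitly rather than asserting that the algebra lands on the displayed expression, because as written that final step would fail.
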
	
		The R{\'e}nyi entropy, Tsallis entropy and von-Neumann entropy of $G$ are respectively given by 
		\begin{align}
			\begin{split}			
				H_{q}^{(R)}(G) = & {1 \over {1 - q}}\left( {\log \left( {\sum\limits_{i = 1}^{n_1} {{{\left( {{{{\lambda^{(1)} _i} + n_2 + 1 \pm \sqrt {{{({\lambda^{(1)} _i} + n_2 + 1)}^2} - 4{\lambda^{(1)}_i}} } \over 2}} \right)}^q}}  + \sum\limits_{j = 1}^{n_2} {{{\left( {n_1\left( {{\lambda^{(2)}_j} + 1} \right)} \right)}^q}} } \right)}\right) \\
				& - \frac{1}{1-q} \left(q\log d \right) ,  \\	
				H_{q}^{(T)}(G) = & {1 \over {1 - q}}\left[ {{{\left( {{1 \over d}} \right)}^q}\left( {{{\left( {\sum\limits_{i = 1}^{n_1} {{{{\lambda^{(1)}_i} + n_2 + 1 \pm \sqrt {{{({\lambda^{(1)}_i} + n_2 + 1)}^2} - 4{\lambda^{(1)}_i}} } \over 2}} } \right)}^q}}\right)}\right] \\
				& + \frac{1}{1-q}\left({{\left( {{1 \over d}} \right)}^q}{{\left( {\sum\limits_{j = 1}^{n_2} {n_1\left(\lambda^{(2)}_j+1\right)} } \right)}^q}  - 1  \right)\text{and} \\			
				H(G) =	& -   \sum\limits_{i = 1}^{{n_1}} {\left( {{{\lambda _1^{(1)} + {n_2} + 1 \pm \sqrt {{{\left( {\lambda _1^{(1)} + {n_2} + 1} \right)}^2} - 4\lambda _1^{(1)}} } \over {2d}}} \right)} \\
				& \times \log \left( {{{\lambda _1^{(1)} + {n_2} + 1 \pm \sqrt {{{\left( {\lambda _1^{(1)} + {n_2} + 1} \right)}^2} - 4\lambda _1^{(1)}} } \over {2d}}} \right) - \left( {\sum\limits_{j = 1}^{{n_2}} {\left( {{{\lambda _1^{(2)} + 1} \over d}} \right)\log \left( {{{\lambda _1^{(2)} + 1} \over d}} \right)} } \right).
			\end{split}
		\end{align}
		
		Let $G_1$ be any graph, and $G_2$ be an $K$-regular graph, that the Laplacian eigenvalues \cite{sharma2017structural} of $G = G_1\circ G_2$ are given by
		\begin{enumerate}
			\item[(a)] $\frac{\lambda^{(1)}_i + n_2 + 2K+ 1 \pm \sqrt{((\lambda^{(1)}_i + n_2)-(2K + 1))^2+4m}}{2}$ with multiplicity $1$ for $i=1,2,\dots,n_1$, and
			\item[(b)] $\lambda^{(2)}_j+1$ with multiplicity $n_1$  for $j=1,\dots,n_2-1$.
		\end{enumerate}
		Now we have the following lemma.
		\begin{lemma}
			Let $G_1$ be any graph, and $G_2$ be an $K$-regular graph. Then the Sharma-Mittal entropy of $G=G_1\circ G_2$, using $\rho_L(G)$, is given by		
			\begin{equation}\label{SM_CG_REG}
			\begin{split}			
				H_{q,r}(G) = & \frac{1}{1-r}\left[\left(\frac{1}{d}\right)^{\frac{q(1-r)}{1-q}}\left(\left(\sum\limits_{i=1}^{n_1}\frac{\lambda^1_i + n_2 + 2R+ 1 \pm \sqrt{((\lambda^1_i + n_2)-(2R + 1))^2+4n_2}}{2}\right)^{\frac{q(1-r)}{1-q}}\right)\right]\\
				& + \left( \frac{1}{1-r} \left( \left(\frac{1}{d}\right)^{\frac{q(1-r)}{1-q}} \left(\sum\limits_{j=1}^{n_2-1}n\left(\lambda^2_j+1\right)\right)^{\frac{q(1-r)}{1-q}}\right)-1\right).\\
			\end{split}
			\end{equation}
		\end{lemma}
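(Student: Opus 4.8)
The plan is to follow the same template used for the earlier product-graph lemmas (the Cartesian and Kronecker cases), namely to substitute the known Laplacian spectrum of $G = G_1 \circ G_2$ directly into the defining equation~(\ref{Sharma_Mittal_entropy}). First I would recall that, writing $\gamma = \lambda/d$ for the eigenvalues of $\rho_L(G)$, one has $S_q(G) = \frac{1}{d^q} S_{L,q}(G) = \frac{1}{d^q}\sum_i \lambda_i^q$, so the only ingredient needed is the multiset of Laplacian eigenvalues $\lambda_i$ of $G$ together with their multiplicities. Everything then reduces to evaluating this sum and inserting it back into the Sharma-Mittal formula.

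Next I would insert the spectrum quoted from~\cite{sharma2017structural} for the case where $G_2$ is $K$-regular. Here the eigenvalues split into two families: the type (a) eigenvalues, which come in two branches through the $\pm$ sign (so that each index $i$ contributes two distinct values, $2n_1$ in all), and the type (b) eigenvalues $\lambda^{(2)}_j + 1$, each carried with multiplicity $n_1$ over $j = 1, \dots, n_2 - 1$. A quick count gives $2n_1 + n_1(n_2-1) = n_1(n_2+1)$ eigenvalues, matching the order of $L(G)$. I would then break $S_{L,q}(G) = \sum_i \lambda_i^q$ into the corresponding two sums, one over the type (a) eigenvalues (summed over both sign branches) and one over the type (b) eigenvalues carrying the multiplicity factor $n_1$.

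The remaining work is purely algebraic: I would factor $(1/d)^q$ out of each $q$-th power and use the identity $\left(\frac{1}{d^q}\right)^{\frac{1-r}{1-q}} = \left(\frac{1}{d}\right)^{\frac{q(1-r)}{1-q}}$ to pull the normalisation outside, exactly as in equation~(\ref{eq:cartesian}) and its analogues; substituting the resulting $S_q(G)$ into~(\ref{Sharma_Mittal_entropy}) then yields the claimed closed form~(\ref{SM_CG_REG}). The only point demanding care is not a genuine analytic obstacle but careful bookkeeping of the $\pm$ branches and the multiplicities, since each type (a) index produces two eigenvalues while each type (b) eigenvalue appears $n_1$ times, and mishandling either would alter both the eigenvalue count and the value of $S_{L,q}(G)$. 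Once the spectrum is tabulated correctly, the identity follows by direct substitution.
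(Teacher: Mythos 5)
Your strategy --- tabulate the Laplacian spectrum of $G_1 \circ G_2$ quoted from \cite{sharma2017structural} and substitute $S_q(G) = \frac{1}{d^q}\sum_i \lambda_i^q$ into equation (\ref{Sharma_Mittal_entropy}) --- is exactly the template the paper uses for the Cartesian and Kronecker lemmas, and the paper in fact supplies no separate proof of this particular lemma beyond that substitution. Your bookkeeping is also sound: the two branches of the $\pm$ sign give $2n_1$ type-(a) eigenvalues, the type-(b) family contributes $n_1(n_2-1)$, and the total $n_1(n_2+1)$ matches the order of $L(G)$.

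The problem is your final sentence: direct substitution does \emph{not} yield the displayed equation (\ref{SM_CG_REG}). What substitution actually produces is
\begin{equation*}
H_{q,r}(G) = \frac{1}{1-r}\left[\left(\frac{1}{d}\right)^{\frac{q(1-r)}{1-q}}\left(\sum_{i,\pm}\left(\frac{\lambda^{(1)}_i + n_2 + 2K+ 1 \pm \sqrt{((\lambda^{(1)}_i + n_2)-(2K + 1))^2+4n_2}}{2}\right)^{q} + n_1\sum_{j=1}^{n_2-1}\left(\lambda^{(2)}_j+1\right)^{q}\right)^{\frac{1-r}{1-q}} - 1\right],
\end{equation*}
that is, a single exponent $\frac{1-r}{1-q}$ applied to the \emph{total} sum $S_{L,q}(G)$ of $q$-th powers. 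The formula in the lemma instead (i) distributes that exponent over the two eigenvalue families, which amounts to using $(A+B)^{p}=A^{p}+B^{p}$ and is false for $p\neq 1$, and (ii) raises each sub-sum as a whole to the power $\frac{q(1-r)}{1-q}$ rather than summing the $q$-th powers of the individual eigenvalues. Consequently, if you carry out your plan faithfully you will arrive at an expression that disagrees with the statement as printed; the claim that ``the identity follows by direct substitution'' cannot be sustained without either correcting the statement or reproducing its error. You should also reconcile the cosmetic mismatches between the quoted spectrum and the lemma ($2R$ versus $2K$, $n$ versus $n_1$, and $4m$ versus $4n_2$ under the square root) before asserting that the two sides agree.
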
 
		The R{\'e}nyi entropy, Tsallis entropy and von-Neumann entropy of $G$ are $H_{q}^{(R)}(G),H_{q}^{(T)}(G)$ and $H(G)$ respectively, where 
		\begin{align}
			\begin{split}			
				H_{q}^{(R)}(G) = & {1 \over {1 - q}}\left( {\log\left( {{{\sum\limits_{i = 1}^{n_1} {\left( {{{{\lambda^{(1)}_i} + n_2 + 2K + 1 \pm \sqrt {{{(({\lambda^{(1)}_i} + n_2) - (2K + 1))}^2} + 4n_2} } \over {2}}} \right)} }^q} + {{\sum\limits_{j = 1}^{n_2- 1} {\left( {n_1\left( {{\lambda^{(2)}_j} + 1} \right)}  \right)} }^q}} \right)}\right) \\
				&  - {1 \over {1 - q}} \left( q\log d \right), \\				
			 H_{q}^{(T)}(G) = & {1 \over {1 - q}} {{\left( {{1 \over d}} \right)}^q} \left( {{{\sum\limits_{i = 1}^{n_1} {\left( {{{{\lambda^{(1)}_i} + n_2 + 2K + 1 \pm \sqrt {{{(({\lambda^{(1)}_i} + n_2) - (2K + 1))}^2} + 4n_2} } \over 2}} \right)} }^q}} \right)\\
			 &  + {1 \over {1 - q}} \left( {{\left( {{1 \over d}} \right)}^q} \left( {{{\sum\limits_{j = 1}^{n_2 - 1} {\left( {n_1\left(\lambda^{(2)}_j+1\right)} \right)} }^q}} \right) 
			 - 1 \right) \hspace*{.2cm}  \text{and} \\	
			H(G) = & -  \sum\limits_{i = 1}^{n_1}  {{{\lambda^{(1)}_i} + n_2 + 2K + 1 \pm \sqrt {{{(({\lambda^{(1)}_i} + n_2) - (2K + 1))}^2} + 4n_2} } \over 2}\\
				&\times \log \left( {{{{\lambda^{(1)} _i} + n_2 + 2K + 1 \pm \sqrt {{{(({\lambda^{(1)}_i} + n_2) - (2K + 1))}^2} + 4n_2} } \over 2}} \right)  \\
			 &+ \left( {\sum\limits_{j = 1}^{n_2 - 1} n_1 {\left( {\left( {{\lambda^{(2)} _j} + 1} \right)\log \left( {{\lambda^{(2)}_j} + 1} \right)} \right)} } \right) .
			\end{split}
		\end{align}
		In order to define the Sharma-mittal and other entropies for laplacian spectrum of corona graphs, a function $f_L(x)$ is defined below. For an arbitrary graph and a K-regular graph on $n$ nodes, the function $f_L\colon\mathbb{R}\longrightarrow\mathbb{R}$ \cite{sharma2017structural} is given by
		\begin{align}\label{fnl}
			\begin{split}			
			{f_L}(x) &= {{x + n + 1 \pm \sqrt {{{(x + n + 1)}^2} - 4x} } \over 2},
			\end{split}
		\end{align}
		where $ \hat f_L^{j}(x) = {f_L}(\hat f_L^{j - 1}(x) + 1)$ and
		$\hat f_L^0(x) = x + 1$.

		\begin{theorem}
			Consider a Corona Graph $G^{(m)}$, $m\geq 1$ generated by a graph $G=G^{(0)}$ on $n$ vertices. Let $0\leq j \leq m$, $\varLambda(G)={\lambda_1,\dots,\lambda_n}$ and $\varGamma(G)={\nu_1,\dots,\nu_n}$ be the laplacian and signless lapalcian spectrums respectively. Then the Sharma-Mittal entropy is given by
			\[
			H_{q,r}(G)={1 \over {1 - r}}\left[ {\left( {{{(\sum\limits_{i = 2}^n {\sum\limits_{j = 0}^{m - 1} {\left( {{{\left( {n(n + 1} \right)}^{m - j - 1}}} \right)\hat f_L^j(\lambda_i)} } )}^{^{q{{1 - r} \over {1 - q}}}}} + {{(\sum\limits_{i = 1}^n {\hat f_L^m(\lambda_i)} )}^{q{{1 - r} \over {1 - q}}}}} \right) - 1} \right].
			\]
		\end{theorem}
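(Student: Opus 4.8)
The plan is to reuse the template of the preceding product-graph lemmas: first pin down the full Laplacian spectrum of $G^{(m)}$, then assemble $S_q(G^{(m)})=\frac{1}{d^{q}}\sum_{\ell}\ell^{q}$ (the sum over all Laplacian eigenvalues $\ell$ of $G^{(m)}$, counted with multiplicity), and finally substitute this into the definition of the Sharma-Mittal entropy, equation~(\ref{Sharma_Mittal_entropy}). All the work is in the first step, because unlike the Cartesian or Kronecker cases the spectrum is not a single closed list but must be unfolded from the one-step corona recursion $G^{(k+1)}=G^{(k)}\circ G^{(0)}$.

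I would obtain the spectrum by induction on $m$. Put $N_k=n(n+1)^{k}$ for the order of $G^{(k)}$. The single corona product eigenvalue formula recalled above, applied with $G_1=G^{(k)}$ and $G_2=G^{(0)}=G$, says that each eigenvalue $\alpha$ of $L(G^{(k)})$ is replaced by the two branches $f_{L}(\alpha)$ of equation~(\ref{fnl}) (each of multiplicity one), while $n-1$ new eigenvalues $\lambda_{j}+1$, $j=2,\dots,n$, are created, each of multiplicity $N_k$. Unfolding this recursion, the seed eigenvalues $\lambda_{1},\dots,\lambda_{n}$ persist as a core on which $f_{L}$ acts $m$ times, while a copy of $G$ first attached at stage $\ell$ contributes its non-trivial eigenvalues $\lambda_{j}+1$ carried forward by $m-\ell$ further applications of $f_{L}$. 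These are precisely the values packaged by the iterated map $\hat f_{L}^{\,j}$ of (\ref{fnl}): the core gives the terms $\hat f_{L}^{\,m}(\lambda_i)$ of the second sum, and the stage-$\ell$ copies give $\hat f_{L}^{\,j}(\lambda_i)$ with $j=m-\ell$ and multiplicity $N_{\ell-1}=N_{m-j-1}$, with $j$ running over $0,\dots,m-1$ exactly as in the statement.

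With the spectrum in hand I would collect $\sum_{\ell}\ell^{q}$ as the sum of $q$-th powers over the two families: the attached copies yield $\sum_{i=2}^{n}\sum_{j=0}^{m-1}N_{m-j-1}\,\hat f_{L}^{\,j}(\lambda_i)^{q}$ and the core yields $\sum_{i=1}^{n}\hat f_{L}^{\,m}(\lambda_i)^{q}$, where in each summand $\hat f_{L}^{\,j}$ is read as carrying the sum over its sign-branches. Dividing by $d^{q}$ produces $S_q(G^{(m)})$, and inserting it into equation~(\ref{Sharma_Mittal_entropy}) gives the asserted expression; the R\'enyi, Tsallis and von-Neumann formulas then drop out as the limits $r\to1$, $r\to q$ and $(q,r)\to(1,1)$, just as in the earlier product-graph results.

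The hard part will be the spectral bookkeeping in the inductive step. Because $f_{L}$ is two-valued, the number of distinct branches doubles at every stage, so one has to be scrupulous about separating this branch count from the stage multiplicities $N_{m-j-1}$ and about what the symbol $\hat f_{L}^{\,j}(\lambda_i)$ is taken to mean. The cleanest safeguard is a dimension count: verifying that the total multiplicity generated, $\sum_{i=1}^{n}2^{m}+\sum_{i=2}^{n}\sum_{j=0}^{m-1}N_{m-j-1}\,2^{j}$, collapses to $N_m=n(n+1)^{m}$ confirms both that the decomposition is exhaustive and that the multiplicities are correctly assigned. A minor but real point to get right is the two different ranges of the seed index: the first sum starts at $i=2$ because an attached copy inherits only the non-trivial eigenvalues $\lambda_j+1$ with $j\ge2$, whereas the core sum starts at $i=1$ since the null eigenvalue $\lambda_1=0$ still participates in the repeated action of $f_{L}$.
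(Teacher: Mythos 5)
Your proposal follows essentially the same route as the paper: identify the Laplacian spectrum of $G^{(m)}$ in terms of the iterated map $\hat f_L^{\,j}$, form $S_q$, and substitute into equation (\ref{Sharma_Mittal_entropy}). The only real difference is that the paper simply quotes the spectrum (with its multiplicities $n(n+1)^{m-j-1}$) from \cite{sharma2017structural}, whereas you re-derive it by unfolding the one-step corona recursion; your dimension count $n\,2^{m}+(n-1)\sum_{j=0}^{m-1}n(n+1)^{m-j-1}2^{j}=n(n+1)^{m}$ is a worthwhile sanity check that the paper omits, and your attention to the branch count of the two-valued $f_L$ is more careful than the paper's.

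One caveat: your final step will not literally land on the displayed formula. Carrying out your (correct) assembly gives
\[
H_{q,r}(G^{(m)})=\frac{1}{1-r}\left[\left(\frac{1}{d^{q}}\left(\sum_{i=2}^{n}\sum_{j=0}^{m-1}n(n+1)^{m-j-1}\bigl(\hat f_L^{\,j}(\lambda_i)\bigr)^{q}+\sum_{i=1}^{n}\bigl(\hat f_L^{\,m}(\lambda_i)\bigr)^{q}\right)\right)^{\frac{1-r}{1-q}}-1\right],
\]
with the $q$-th power on each eigenvalue and a single outer exponent $\frac{1-r}{1-q}$ applied to the whole of $S_q$. The theorem as printed instead applies the exponent $q\frac{1-r}{1-q}$ separately to each of the two eigenvalue sums (and drops the $1/d$ normalisation); since $x\mapsto x^{\frac{1-r}{1-q}}$ is not additive, these do not agree. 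The paper's own proof makes the same conflation, so you should present your computation as proving the corrected form rather than claiming it reproduces the asserted expression verbatim. You may also note that the paper's proof additionally treats the signless Laplacian spectrum for a $K$-regular seed, which your outline does not address.
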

		\begin{proof}
			If $d$ is the degree of $G$ that is sum of the degrees of all the vertices i.e. $Tr(L(G))=d$. Then the eigenvalues of the density matrix $\rho_L(G)$ \cite{sharma2017structural}, using equation \ref{fnl}, are given by 
			\begin{enumerate}
				\item[(a)] $\frac{\hat{f}_L^j(\lambda_i)}{d}\in \varLambda(G^{(m)}$, $0\leq j\leq m-1,i=2,\dots n$ with multiplicity $n(n+1)^{m-j-1}$, and
				\item[(b)] $\frac{\hat{f}_L^m(\lambda_i)}{d}\in \varLambda(G^{(m)})$, $0\leq i \leq n$ with multiplicity $1$.
			\end{enumerate}
			The Sharma-Mitatl entropy is given by,
			\begin{align}
				\begin{split}
					H_{q,r}(G)&={1 \over {1 - r}}\left[ {\left( {{{\left(\sum\limits_{i = 2}^n {\sum\limits_{j = 0}^{m - 1} {{{\left( {{{\left( {n(n + 1} \right)}^{m - j - 1}}} \right)\hat f_L^j({\lambda_i})} \over d}} } \right)}^{^{q{{1 - r} \over {1 - q}}}}} + {{\left(\sum\limits_{i = 1}^n {{{\hat f_L^m({\lambda_i})} \over d}} \right)}^{q{{1 - r} \over {1 - q}}}}} \right) - 1} \right],\\		
					&={1 \over {1 - r}}\left[ {{{\left( {{1 \over d}} \right)}^{q{{1 - r} \over {1 - q}}}}\left( {{{\left(\sum\limits_{i = 2}^n {\sum\limits_{j = 0}^{m - 1} {\left( {{{\left( {n(n + 1} \right)}^{m - j - 1}}} \right)\hat f_L^j({\lambda_i})} } \right)}^{^{q{{1 - r} \over {1 - q}}}}} + {{\left(\sum\limits_{i = 1}^n {\hat f_L^m} ({\lambda_i})\right)}^{q{{1 - r} \over {1 - q}}}}} \right) - 1} \right].
				\end{split}
			\end{align}
			If $G$ is K-regular, then $\nu_n=2K$ and
			\begin{enumerate}
				\item[(c)] $\frac{\hat{f}_L^j(\nu_i)}{d}\in \varGamma(G^{(m)})$, $0\leq j\leq m-1,i=1,\dots n-1$ with multiplicity $n(n+1)^{m-j-1}$, and
				\item[(d)] $\frac{\hat{f}_L^m(\nu_i)}{d}\in \varGamma(G^{(m)})$, $0\leq i \leq n$ with multiplicity $1$.
			\end{enumerate}
			The Sharma-Mittal entropy is given by
			\begin{align}
				\begin{split}
					H_{q,r}(G)&={1 \over {1 - r}}\left[ {\left( {{{\left(\sum\limits_{i = 1}^{n-1} {\sum\limits_{j = 0}^{m - 1} {{{\left( {{{\left( {n(n + 1)} \right)}^{m - j - 1}}} \right)\hat f_L^j({\nu _i})} \over d}} } \right)}^{^{q{{1 - r} \over {1 - q}}}}} + {{\left( \sum\limits_{i = 1}^n {{{\hat f_L^m({\nu _i})} \over d}} \right) }^{q{{1 - r} \over {1 - q}}}}} \right) - 1} \right] \\		
					&={1 \over {1 - r}}\left[ {{{\left( {{1 \over d}} \right)}^{q{{1 - r} \over {1 - q}}}}\left( {{{\left(\sum\limits_{i = 1}^{n-1} {\sum\limits_{j = 0}^{m - 1} {\left( {{{\left( {n(n + 1} \right)}^{m - j - 1}}} \right)\hat f_L^j({\nu _i})} } \right)}^{^{q{{1 - r} \over {1 - q}}}}} + {{\left(\sum\limits_{i = 1}^n {\hat f_L^m} ({\nu _i})\right)}^{q{{1 - r} \over {1 - q}}}}} \right) - 1} \right].
				\end{split}
			\end{align}		
		\end{proof}
		
		The R{\'e}nyi entropy, Tsallis entropy and von-Neumann entropy of $G^{(m)}=G^{(m-1)}\circ G^{(0)}$ are $H_{q}^{(R)}(G),H_{q}^{(T)}(G)$ and $H(G)$ respectively, where
		\begin{align}
			\begin{split}
				H_{q}^{(R)}(G)& ={1 \over {1 - q}}\left( {\log \left( {\sum\limits_{i = 2}^n {\sum\limits_{j = 0}^{m - 1} {{{\left( {\left( {{{\left( {n(n + 1} \right)}^{m - j - 1}}} \right)\hat f_L^j({\nu _i})} \right)}^q}} }  + \sum\limits_{i = 1}^n {{{\left( {\hat f_L^m({\nu _i})} \right)}^q}} } \right) - q\log d} \right), \\
				H_{q}^{(T)}(G))&= {1 \over {1 - q}}\left[ {{{\left( {{1 \over d}} \right)}^q}\left( {\left( {\sum\limits_{i = 2}^n {\sum\limits_{j = 0}^{m - 1} {{{\left( {\left( {{{\left( {n(n + 1} \right)}^{m - j - 1}}} \right)\hat f_L^j({\nu _i})} \right)}^q}} } } \right) + \left( {{{\left( {\sum\limits_{i = 1}^n {\hat f_L^m({\nu _i})} } \right)}^q}} \right)} \right) - 1} \right] \text{and} \\ 
				H(G))&=- \left( {\sum\limits_{i = 2}^n {\sum\limits_{j = 0}^{m - 1} {\left( {\left( {{{\left( {n(n + 1} \right)}^{m - j - 1}}} \right){{\hat f_L^j({\nu _i})} \over d}\log {{\left( {\hat f_L^j({\nu _i})} \right)} \over d}} \right)} }  + \sum\limits_{i = 1}^n {\left( {{{\hat f_L^m({\nu _i})} \over d} \log\left( {{{\hat f_L^m({\nu _i})} \over d}} \right)} \right)} } \right).
			\end{split}
		\end{align}
		
		The entropies for corona graph with different valus of $r$ and $q$ are plotted in figure \ref{fig:entropies}. It is clearly visible from the figure that for every value of $r$ and $q$ used, The R{\'e}nyi entropy first drops to minimum and then keep on increasing with each iteration of the corona graph and Shanon entropy monotonically increasing. Tsallis entropy and Shanon entropy are constant for each iteration of the corona product.
	
		\begin{figure}[!h]
			\subfloat{\includegraphics[width=2.8in]{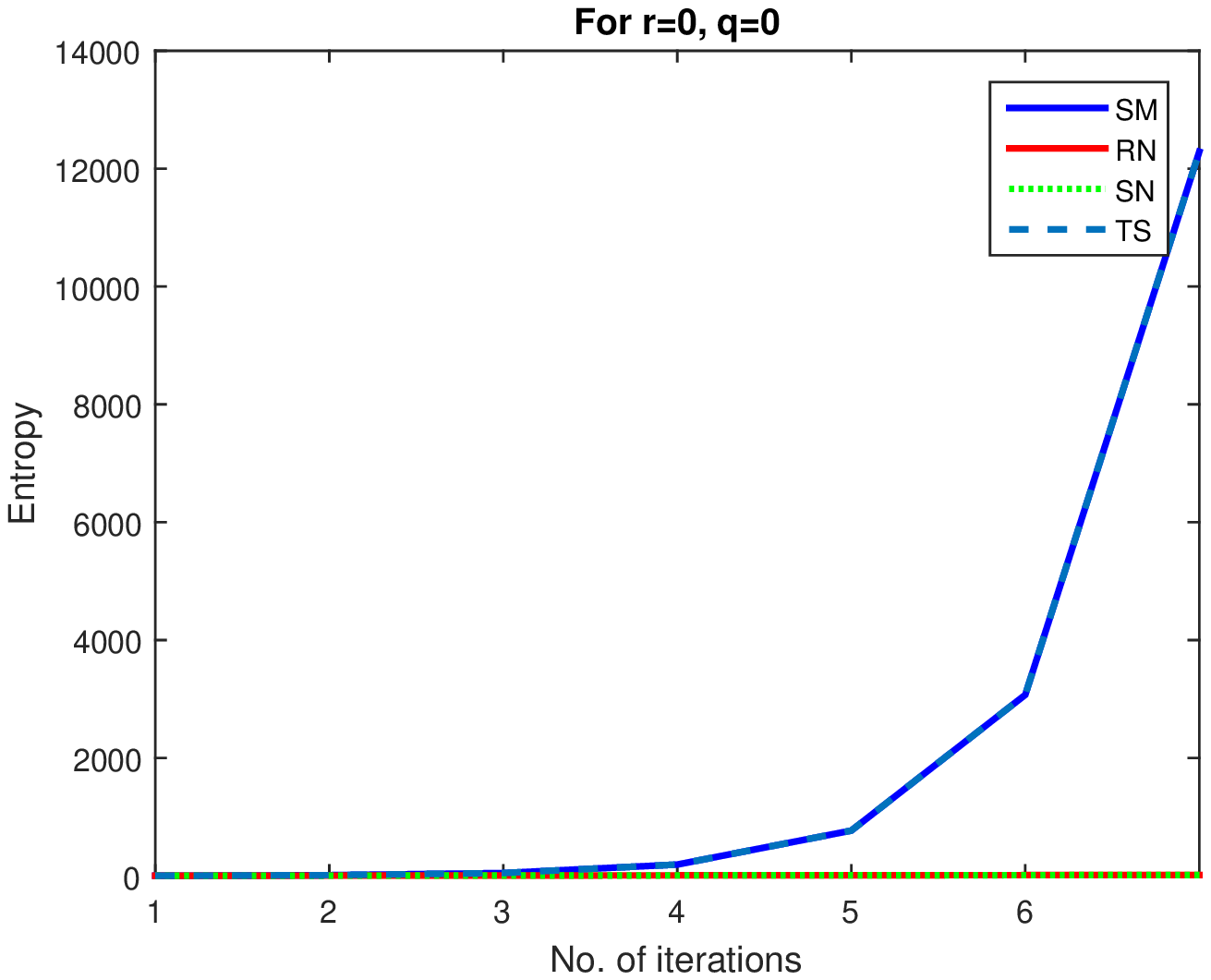}
			\label{a}}
			\hfill
			\subfloat{\includegraphics[width=2.8in]{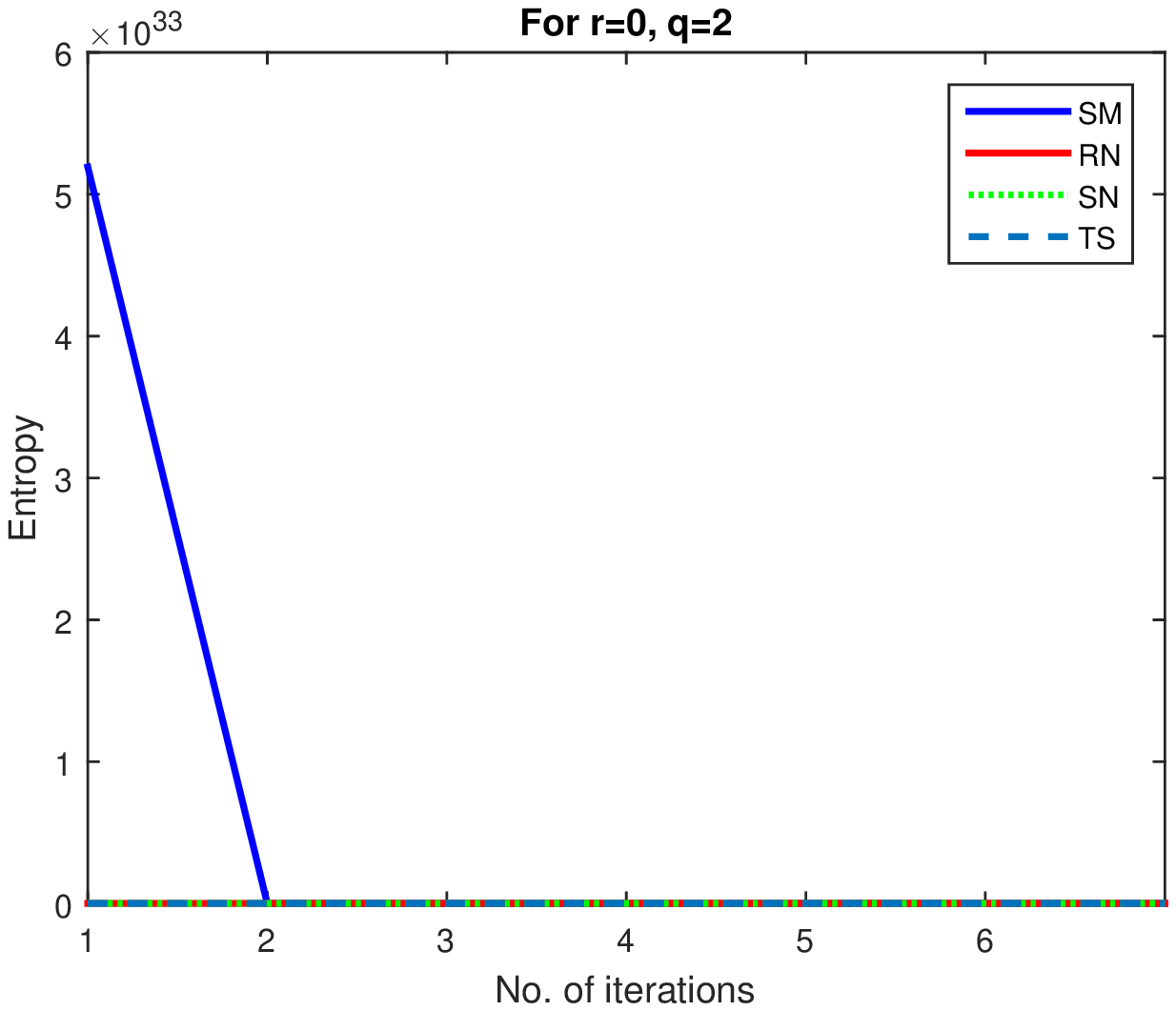}
			\label{b}}\\
			
			\subfloat{\includegraphics[width=2.8in]{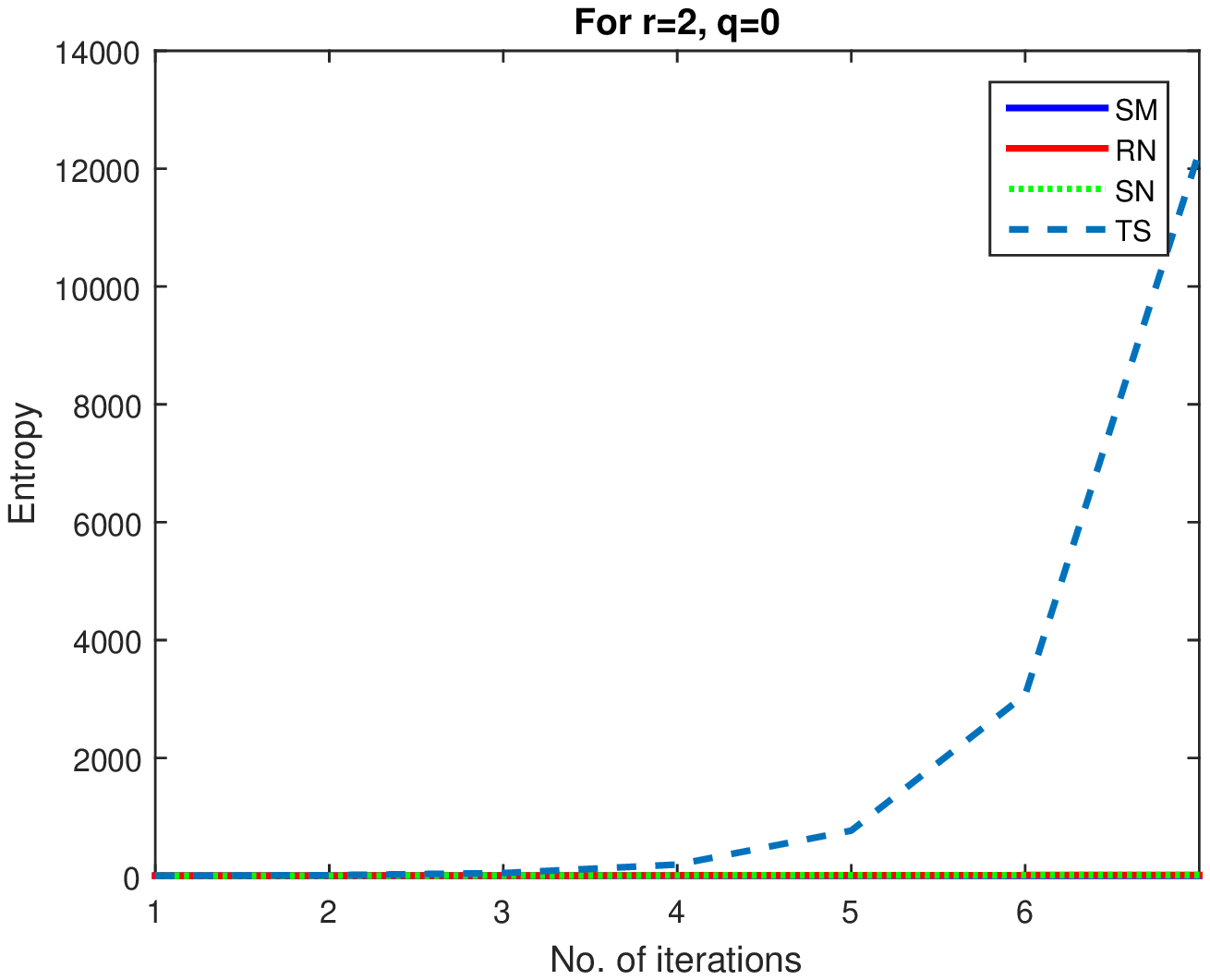}
			\label{c}}
			\hfill
			\subfloat{\includegraphics[width=2.8in]{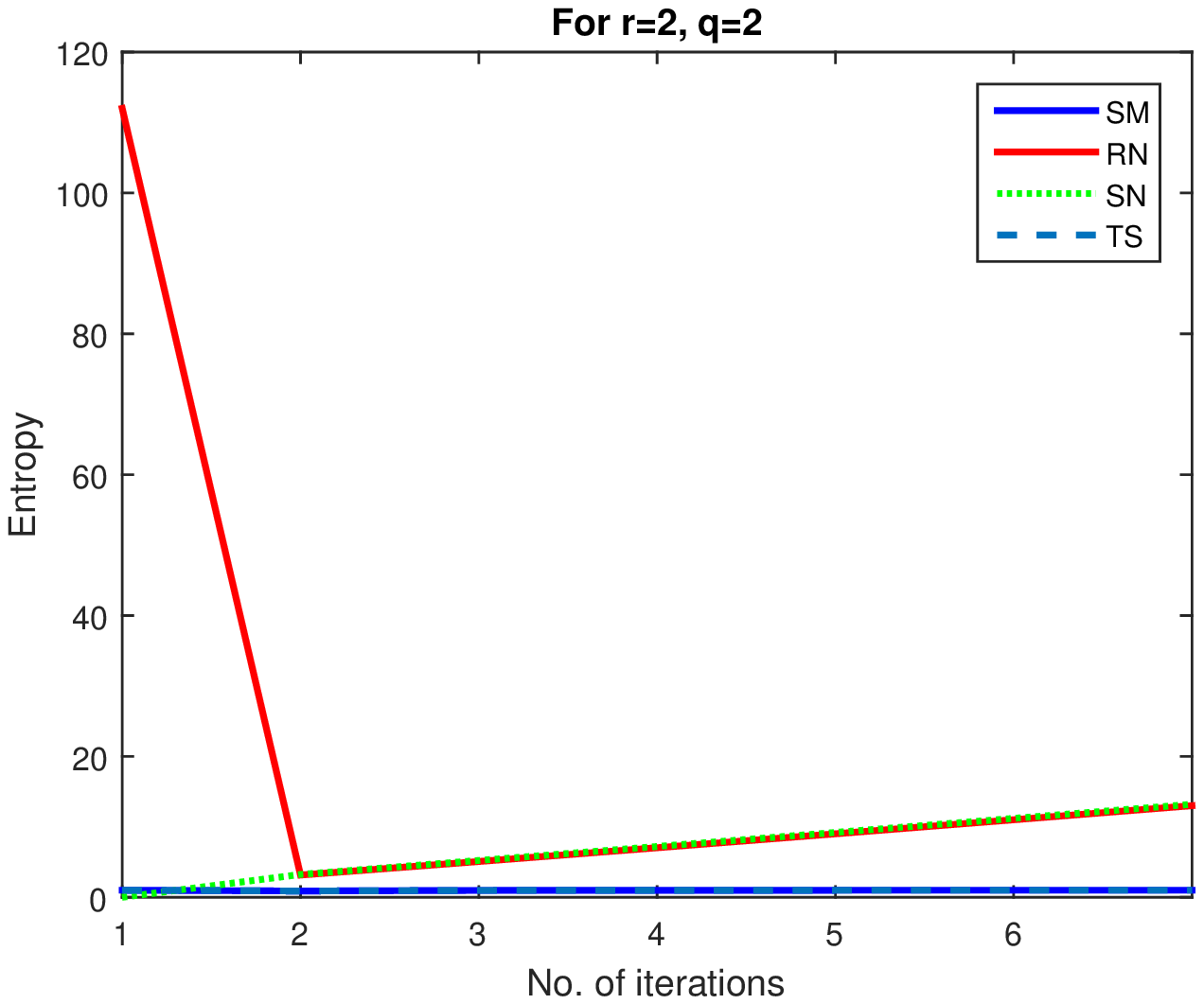}
			\label{d}}
		\caption{Entropies for six iterations ($1\leq m \leq 6$) of Corona Graph, $G^{(m)}=G^{(m-1)}\otimes G$ with different values of $r$ and $q$ where $G$ is $K_3$.}
		\label{fig:entropies}
		\end{figure}

	\section{Conclusion and open problems} 
	
		In this article we calculate the Sharma-Mittal entropy for cycle, path, and complete graph. Their R{\'e}nyi, Tsallish, and von-Neumann entropies are calculated as a limit of Sharma-Mittal entropy. A number of bound on these entropies are demonstrated. We also have studied the changes of these entropies in the formation of complex network. 
		
		It opens a number of new directions for research. These generalised entropies are essentials in the investigations of dynamical systems and thermodynamics. Thermodynamic properties of graphs in terms of Sharma-Mittal and Tsallis entropy may be investigated by interested readers.

	\section*{Acknowledgement}
	
		SM, AS, and SD have equal contribution in this work. The section 2 and 3 may be a part of SM's PhD thesis. The section 4 may be included in the PhD thesis of AS.	
		

\end{document}